\newtheorem{defn}{Definition}
\newtheorem{thm}{Theorem}
\newtheorem{cor}[thm]{Corollary}
\newtheorem{note}{Remark}
\newtheorem{claim}{Claim}
\newcommand{\bit}{\begin{itemize}}
	\newcommand{\eit}{\end{itemize}}
\newcommand{\bcor}{\begin{cor}}
	\newcommand{\ecor}{\end{cor}}
\newcommand{\beq}{\begin{equation}}
\newcommand{\eeq}{\end{equation}}
\newcommand{\beqn}{\begin{equation*}}
\newcommand{\eeqn}{\end{equation*}}
\newcommand{\bea}{\begin{eqnarray}}
\newcommand{\eea}{\end{eqnarray}}
\newcommand{\bean}{\begin{eqnarray*}}
	\newcommand{\eean}{\end{eqnarray*}}
\newcommand{\ben}{\begin{enumerate}}
	\newcommand{\een}{\end{enumerate}}
\newcommand{\bdefn}{\begin{defn}}
	\renewcommand\footnotemark{}
			 	\newcommand{\uh}{\underline{h}}
			\newcommand{\calc}{\ensuremath{\mathcal{C}}}
			\newcommand{\calcp}{\ensuremath{\mathcal{C}^{\perp}}}
	\newcommand\scalemath[2]{\scalebox{#1}{\mbox{\ensuremath{\displaystyle #2}}}}
\begin{document}
			\sloppy
			\title{A Tight Rate Bound and a Matching Construction for Locally Recoverable Codes with Sequential Recovery From Any Number of Multiple Erasures}
			
			\author{
				\IEEEauthorblockN{S. B. Balaji, Ganesh R. Kini and P. Vijay Kumar, \it{Fellow}, \it{IEEE}}
				
				\IEEEauthorblockA{Department of Electrical Communication Engineering, Indian Institute of Science, Bangalore.  \\ Email: balaji.profess@gmail.com, kiniganesh94@gmail.com, pvk1729@gmail.com} 
				
				\thanks{P. Vijay Kumar is also an Adjunct Research Professor at the University of Southern California.  This research is supported in part by the National Science Foundation under Grant 1421848 and in part by an India-Israel UGC-ISF joint research program grant.} 
				\thanks{S. B. Balaji would like to acknowledge the support of TCS research-scholarship program.}
			}
			\maketitle
			
			\begin{abstract}
               An $[n,k]$ code $\mathcal{C}$ is said to be locally recoverable in the presence of a single erasure, and with locality parameter $r$, if each of the $n$ code symbols of $\mathcal{C}$ can be recovered by accessing at most $r$ other code symbols.  An $[n,k]$ code is said to be a locally recoverable code with sequential recovery from $t$ erasures, if for any set of $s \leq t$ erasures, there is an $s$-step sequential recovery process, in which at each step, a single erased symbol is recovered by accessing at most $r$ other code symbols.   This is equivalent to the requirement that for any set of $s \leq t$ erasures, the dual code contain a codeword whose support contains the co-ordinate of precisely one of the $s$ erased symbols.  
	
	           In this paper, a tight upper bound on the rate of such a code, for any value of number of erasures $t$ and any value $r \geq 3$, of the locality parameter is derived.  This bound proves an earlier conjecture due to Song, Cai and Yuen.   While the bound is valid irrespective of the field over which the code is defined, a matching construction of {\em binary} codes that are rate-optimal is also provided, again for any value of $t$ and any value $r\geq3$. 
			\end{abstract}
			
			%\tableofcontents
			
			\begin{IEEEkeywords} Distributed storage, locally recoverable codes, codes with locality, locally repairable codes, sequential repair, multiple erasures, rate bound, proof of conjecture.
			\end{IEEEkeywords}
			
%			\section{Introduction}
%			An $[n,k]$ code is said to have locality $r$ if each of the $n$ code symbols of $\mathcal{C}$ can be recovered by accessing at most $r$ other code symbols. Equivalently, there exist $n$ codewords ${h_1 \cdots h_n}$ in the dual code $\mathcal{C}^\perp$ such that $c_i \in \text{supp}(h_i)$ and $|\text{supp}(h_i)| \leq r+1$ for $1 \leq i \leq n$ where $c_i$ denote the $i^\text{th}$ code symbol of $\mathcal{C}$ and $\text{supp}(h_i)$ denote the support of the codeword $h_i$.
%			\paragraph{Codes with Sequential Recovery}
%			An $[n,k]$ code is defined as a code with sequential recovery \cite{BalPraKum} from $t$ erasures having locality $r$ if for any set of $s \leq t$ erased symbols, $\{c_{\sigma_1},...,c_{\sigma_s} \}$, there exists a codeword $h$ in the dual of the code, of Hamming weight $\leq r+1$, such that $\text{supp}(h) \cap \{\sigma_1,...,\sigma_s \} = 1$.  The parameter $r$ is the locality parameter and we will formally refer to this class of codes as $(n,k,r,t)_{\text{seq}}$ codes. When the parameters $(n,k,r,t)$ are clear from the context, we will simply refer to a code in this class as a code with sequential recovery.
			\section{Introduction}
			An $[n,k]$ code $\mathcal{C}$ is said to have locality $r$ if each of the $n$ code symbols of $\mathcal{C}$ can be recovered by accessing at most $r$ other code symbols. Equivalently, there exist $n$ codewords ${\uh_1 \cdots \uh_n}$, not neccessarily distinct, in the dual code $\mathcal{C}^\perp$, such that $i \in \text{supp}(\uh_i)$ and $|\text{supp}(\uh_i)| \leq r+1$ for $1 \leq i \leq n$ where $\text{supp}(\uh_i)$ denotes the support of the codeword $\uh_i$.
			\paragraph{Codes with Sequential Recovery}
			An $[n,k]$ code \calc\ over a field $\mathbb{F}_q$ is defined as a code with sequential recovery \cite{BalPraKum} from $t$ erasures and with locality-parameter $r$, if for any set of $s \leq t$ erased symbols $\{c_{\sigma_1},...,c_{\sigma_s} \}$, there exists a codeword $\uh$ in the dual code \calcp\ of Hamming weight $\leq r+1$, such that $\text{supp}(\uh) \cap \{\sigma_1,...,\sigma_s \} = 1$.  We will formally refer to this class of codes as $(n,k,r,t)_{\text{seq}}$ codes. When the parameters $(n,k,r,t)$ are clear from the context, we will simply refer to a code in this class as a code with sequential recovery.
			\subsection{Background}
			In \cite{GopHuaSimYek} Gopalan et al. introduced the concept of codes with locality (see also \cite{PapDim,OggDat}), where an erased code symbol is recovered by accessing a small subset of other code symbols. The size of this subset denoted by $r$, is typically much smaller than the dimension of the code, making the repair process more efficient when compared with MDS codes. The focus of \cite{GopHuaSimYek} was local recovery from single erasure (see also \cite{GopHuaSimYek,HuaChenLi,KamPraLalKum,TamBar_Optimal_LRC}).
			
			The sequential approach to recovery from erasures, introduced by Prakash et al. \cite{PraLalKum} is one of several  approaches to locally recover from multiple erasures. Codes employing this approach have been shown to be better in terms of rate and minimum distance (see \cite{PraLalKum,RawMazVis,SonYue_3_Erasure,SonYue_Binary_local_repair,BalPraKum,balaji2016binary}).  Local recovery in the presence of two erasures is considered in \cite{PraLalKum} (see also \cite{SonYue_3_Erasure}) where a tight rate bound for two erasure case and an optimal construction is provided. Codes with sequential recovery from three erasures can be found discussed in \cite{SonYue_3_Erasure,BalPraKum,song2016sequential}. A bound on rate of an $(n,k,r,3)_{seq}$ code was derived in \cite{song2016sequential}.   A rate bound for $t=4$ appears in \cite{balaji2016binary}. 
			
			There are several approaches to local recovery from multiple erasures: 
%			\vspace{-0.1cm}
			\ben
			\item {\em Stronger Local Codes:} Here, every code symbol is contained in a local code of length at most $r+t$ and minimum distance at least $t+1$, see (\cite{PraKamLalKum}). While the recovery process is sequential, the erased symbols can be recovered in arbitrary order, see \cite{KamPraLalKum,SonDauYueLi}.
			\item {\em Codes with Availability:} For every code symbol $c_i$ in $\mathcal{C}$, there exist $t$ codewords $\underline{h}^i_1,...,\underline{h}^i_t$ in the dual of the code, each of Hamming weight $\leq r+1$, such that $\text{supp}(\underline{h}^i_g) \cap \text{supp}(\underline{h}^i_j) = \{i\} $, $\forall\ 1 \leq g \neq j \leq t$ ; recovery can be carried out in parallel. For more details see \cite{TamBarFro,WanZha_Combinatorial_Repair_locality,TamBar_Optimal_LRC,WangZhang_multiple_erasure}.
			%RawPapDimVis_arxiv
			\item {\em Codes with Selectable Recovery:} Here, given any set of $t$ erasures, every erased symbol has a parity check that involves that symbol and no other erased symbol; with these codes, one is free to choose the order in which to recover the erased symbols; recovery in parallel may or may not be possible, depending upon the specific construction. For details, see \cite{JuaHolOgg}.
			\item {\em Codes with Co-operative local recovery:} A code with co-operative local recovery has the property that for a given $t$, any set of $t$ erased symbols can be co-operatively recovered by accessing at most $r$ other symbols, see \cite{RawMazVis}.
			\item {\em Codes with Sequential Recovery:}  This class of codes is the object of study in the present paper and a definition has been provided above.  The class of sequential-recovery codes is a larger class of codes that contains all the four above-mentioned classes of codes as depicted in Fig.~\ref{fig:codeclasses}.  For this reason, codes with sequential recovery can potentially achieve higher rate and have larger minimum distance.
			\een
			%	\begin{center}
			%\begin{figure}
			%\includegraphics[trim= 0in  0in 0in 0in, width=4.0in]{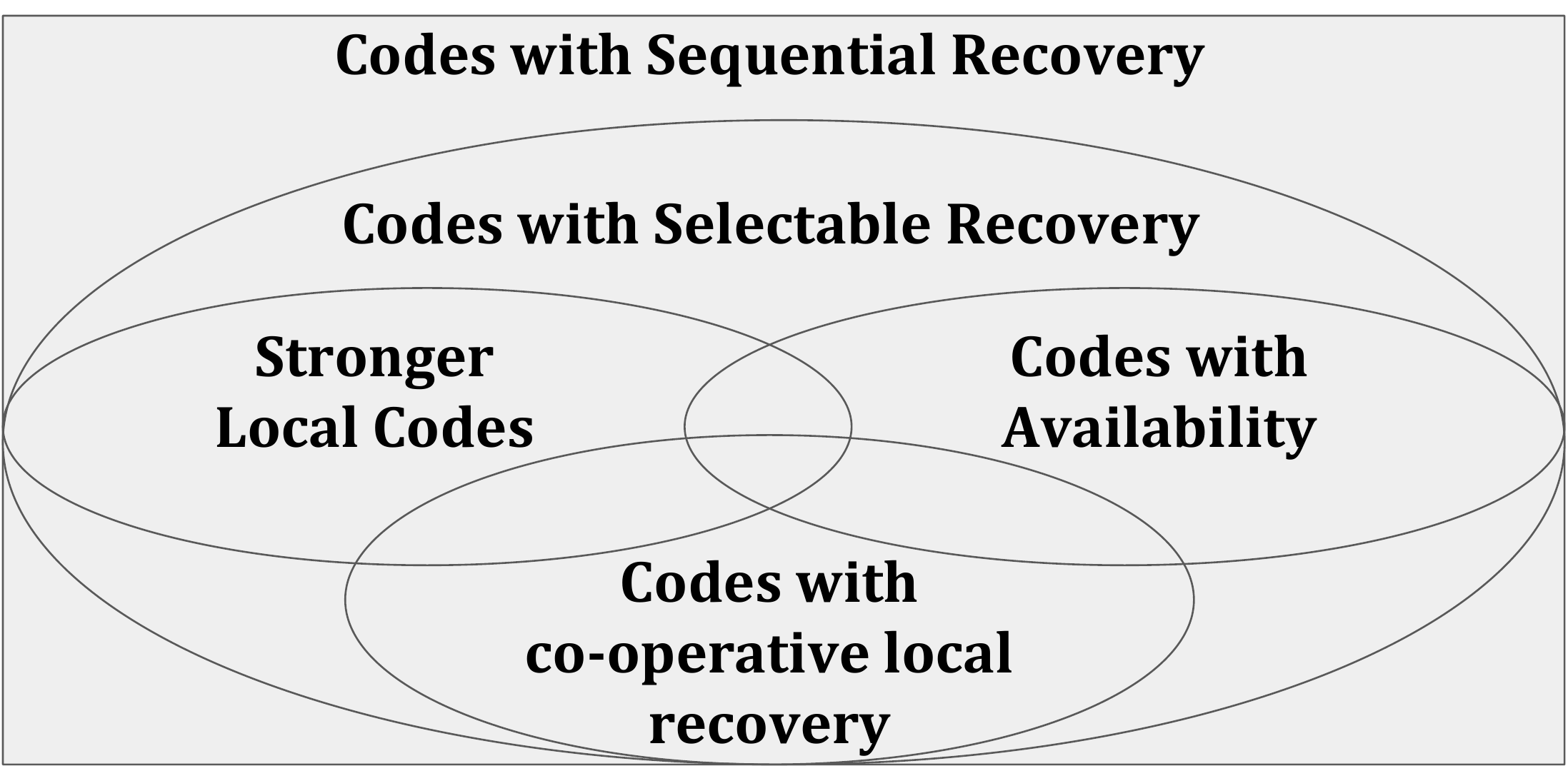}%trim l b r t 
			%
			%%\includegraphics[width=4.3in]{multiple_erasures_overview_bits.pdf}
			%\end{figure}
			%\end{center}
			
			\begin{figure}[ht]
				\centering
				\includegraphics[scale = 0.25]{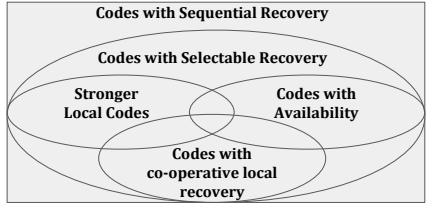}
				\caption{Figure showing the relationship of codes with sequential recovery to some other classes of codes designed for recovery from multiple erasures. }
				\label{fig:codeclasses}
			\end{figure}
			%	\begin{figure}[ht]
			%		\centering
			%		\includegraphics[scale = 0.26]{multiple_erasures_overview.pdf}
			%		\caption{Figure showing that codes with sequential recovery contain certain other important classes of codes.}
			%		\label{fig:codeclasses}
			%	\end{figure}
			%	
%			\vspace{-0.6cm}
			\subsection{Contributions of the Paper} 
			In this paper, we derive an upper bound on the rate of a code having locality-parameter $r$ with sequential recovery from $t$ erasures, for any $r \geq 3$ and any $t$. While the bound is valid irrespective of the field over which the code is defined, we provide here a matching construction of {\em binary} codes that are rate-optimal, i.e., binary codes achieving the rate bound for any $t$ and any $r \geq 3$.   These results are also shown to prove a conjecture on code rate due to Song, Cai and Yeun \cite{SonYue_3_Erasure}. 
						
			\section{Upper Bound on Rate of an $(n,k,r,t)_{seq}$ code}
			In this section we provide an upper bound on the rate of an $(n,k,r,t)_{seq}$ code for $r \geq 3$. The cases of $t$ even and $t$ odd are considered in two separate subsections.
			
			\subsection{Upper Bound on Rate of an $(n,k,r,t)_{seq}$ Code for $t$ Even:}
			In this subsection we provide an upper bound on the rate of an $(n,k,r,t)_{seq}$ code for $t$ even and $r \geq 3$. 
			
			\begin{thm}\label{rate_teven}
				Let $\mathcal{C}$ be an $(n,k,r,t)_{seq}$ code over a field $\mathbb{F}_q$. Let $t$ be a positive even integer and $r \geq 3$. Then 
				
				\begin{equation}
				\frac{k}{n} \leq \frac{r^{\frac{t}{2}}}{r^{\frac{t}{2}} + 2 \sum\limits_{i=0}^{\frac{t}{2}-1} r^i} \label{Thm1}.
				\end{equation}
				
			\end{thm}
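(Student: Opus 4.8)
Write $m:=n-k=\dim\calcp$. Since $k/n=1-m/n$, the claimed bound \eqref{Thm1} is equivalent to
\[
n\ \le\ \Bigl(1+\frac{r^{t/2}}{\,2\sum_{i=0}^{t/2-1}r^{i}\,}\Bigr)\,(n-k),
\]
so the task is to show that a recovery guarantee this strong forces the code to spend at least the stated fraction of its coordinates on parities. \textbf{Step 1 (reduction to a full‑rank, low‑weight parity‑check matrix).} Let $T\subseteq\calcp$ be the set of \emph{all} dual codewords of Hamming weight $\le r+1$ and set $\calc':=T^{\perp}$. Since the sequential‑recovery hypothesis for $\calc$ only ever invokes dual codewords of weight $\le r+1$, and all of these lie in $\calc'^{\perp}=\operatorname{span}(T)$, the code $\calc'$ is again an $(n,k',r,t)_{\text{seq}}$ code with $k'\ge k$; proving \eqref{Thm1} for $\calc'$ yields it for $\calc$. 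Hence we may assume $\calcp$ is spanned by weight‑$\le(r+1)$ codewords and fix a parity‑check matrix $H$ of $\calc$ with $m$ rows, each of weight $\le r+1$ (such a basis exists and automatically covers every coordinate, single‑erasure recovery guaranteeing coverage). Let $\Gamma$ be the Tanner graph of $H$: variable nodes $v_{1},\dots,v_{n}$, each of degree $\ge1$, and check nodes $c_{1},\dots,c_{m}$, each of degree $\le r+1$.

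\textbf{Step 2 (forbidden local configurations — the crux).} The model case is $t=2$. If two variable nodes $v,v'$ had the \emph{same} check $c$ as their unique neighbour in $\Gamma$, then, since every dual codeword is a combination of rows of $H$, a codeword containing $v$ must use row $c$ with a nonzero coefficient and therefore also contains $v'$; hence no weight‑$\le(r+1)$ dual codeword meets $\{v,v'\}$ in exactly one coordinate, contradicting $2$‑erasure recovery. Thus the leaf variable nodes inject into the checks. The heart of the general argument is to extend this to every $s\le t$: one shows that whenever a certain short alternating subconfiguration of $\Gamma$ (a small subtree that is ``closed off'', in the sense that $\calcp$ cannot separate its variable nodes) occurs, one can exhibit an explicit erasure pattern of $\le t$ coordinates for which every weight‑$\le(r+1)$ dual codeword meets the pattern in $0$ or $\ge2$ coordinates — impossible. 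The resulting list of injectivity and bounded‑multiplicity constraints on how $\Gamma$ may branch and re‑join within radius $t/2$ is what drives the count. \emph{I expect this step to be the main obstacle}: both choosing the offending erasure pattern and ruling out every potential isolating parity — including ones that reach outside the configuration through the remaining coordinates — are combinatorially delicate, and it is here that evenness of $t$ first enters, the extremal pattern being a central check carrying two $r$‑ary arms of depth $\tfrac t2-1$ (which is exactly what produces the two geometric sums $\sum_{i=0}^{t/2-1}r^{i}$ with the factor $2$).

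\textbf{Step 3 (tiered counting).} Using the constraints of Step 2, partition the variable nodes into tiers $V_{0},V_{1},\dots$ and the checks into tiers $C_{1},C_{2},\dots$ by a breadth‑first exploration started from the degree‑$1$ variable nodes; the injectivity statements force each tier, moving inward, to be a bounded multiple of the next, so that tier sizes decay geometrically with ratio essentially $r$ over $t/2$ steps on each of the two sides. Counting variable nodes tier by tier and closing the bookkeeping with the edge identity $\sum_{v}\deg v=\sum_{c}\deg c\le(r+1)m$ — the routine part — gives $n\le\bigl(1+\tfrac{r^{t/2}}{2\sum_{i=0}^{t/2-1}r^{i}}\bigr)m$, i.e.\ \eqref{Thm1}. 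For a sanity check, the $t=2$ instance of this computation is the three lines $n=|V_{1}|+|V_{\ge2}|\le\tfrac{|V_{1}|}{2}+\tfrac{(r+1)m}{2}\le\tfrac{(r+2)m}{2}$, using $|V_{1}|\le m$ from Step 2 and $\deg v\ge2$ on $V_{\ge2}$.

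\textbf{Step 4 (even versus odd).} Only Steps 2 and 3 use the parity of $t$, and only through the shape of the extremal configuration; for odd $t$ that configuration is centred at a variable node rather than a check, which is what accounts for the different formula established in the following subsection.
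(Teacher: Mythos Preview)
Your overall architecture matches the paper's: reduce to a parity-check matrix $H$ whose rows are a basis of the span of all weight-$\le r{+}1$ dual codewords, stratify the columns by how they sit in $H$, prove structural constraints on the strata, and then count. The paper does this by permuting $H$ into an explicit block-staircase form $D_0\,|\,[A_1;D_1]\,|\,\cdots\,|\,C\,|\,D$ (your ``tiers'' are exactly these blocks) and then derives four linear inequalities among the block sizes which it either telescopes by hand or feeds into an LP to get the bound.

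Where you are making life much harder than necessary is Step~2. You propose to rule out each bad subconfiguration by producing an erasure set $E$ and then arguing that \emph{no} weight-$\le r{+}1$ dual codeword isolates a coordinate of $E$ --- which, as you say, forces you to worry about parities that reach outside the configuration. The paper sidesteps this entirely: sequential recovery from $t$ erasures implies $d_{\min}(\calc)\ge t{+}1$, so it suffices to show that a bad configuration makes some $\le t$ columns of $H$ \emph{linearly dependent}. That is a purely local statement about the chosen rows of $H$ and needs no control over other dual codewords. Concretely, the paper proves by induction that every $A_i$ has column weight exactly $1$ and every $D_i$ has row weight exactly $1$, because otherwise one can write down an explicit dependency among at most $2(i{+}1)$ columns. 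Your $t=2$ paragraph is already this argument in disguise; just keep doing that for the higher tiers.

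Two smaller points. First, your picture of the extremal object as ``a central check with two $r$-ary arms'' is not quite what falls out: the optimal $H$ has $p=0$, no weight-$\ge3$ columns, $a_i=r\,a_{i-1}$ for $i\le\frac t2-1$, and then a final block $C$ of weight-$2$ columns with $2a_{t/2}=r\,a_{t/2-1}$; the factor $2$ in the denominator comes from this last halving, not from a two-armed tree. Second, Step~3 is where the real work is --- the paper needs either a somewhat delicate inductive telescoping (introducing auxiliary quantities $\delta_j$ and using $r\ge3$ to keep signs right) or an LP-duality argument to turn the four basic inequalities into the stated bound; ``tier sizes decay geometrically and close with the edge count'' is the right intuition but not yet a proof.
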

			
			\begin{proof}
				
				We begin by setting $\mathcal{B}_0 = \text{span}({\underline{c} \in \mathcal{C}^{\perp} : w_H(\underline{c}) \leq r+1})$ ($w_H(\underline{c})$ denotes the Hamming weight of the vector $\underline{c}$).
				Let $m$ be the dimension of $\mathcal{B}_0$. Let ${\underline{c}_1,...\underline{c}_m}$ (row vectors) be a basis of $\mathcal{B}_0$ such that $w_H(\underline{c}_i) \leq r+1$.
				\bean
				\text{Let } H_1  & = &  \left[ 
				\begin{array}{c}
					\underline{c}_1 \\
					\vdots \\
					\underline{c}_m			
				\end{array}\right].
				\eean
				It follows that $H_1$ is a parity check matrix of an $(n,n-m,r,t)_{seq}$ code as its row space contains all the codewords of Hamming weight at most $r+1$ which are contained in $\mathcal{C}^\perp$. Also, 
				\bean
				\frac{k}{n} \leq 1 - \frac{m}{n}.
				\eean
				 The idea behind the next few arguments in the proof is the following. The codes with largest rate will tend to have a larger value of $n$ for fixed $m$.  On the other hand, the Hamming weight of the matrix $H_1$ (i.e., the number of non-zero entries in the matrix) is bounded above by $m(r+1)$.  	 It follows that to make $n$ large, one would like the columns of $H_1$ to have as small a weight as possible. It is therefore quite natural to start building $H_1$ by picking as many columns of weight $1$ as possible, then columns of weight $2$ and so on.  As one proceeds by following this approach, it turns out that the matrix $H_1$ is forced to have a certain sparse, block-diagonal, staircase form and an understanding of this structure is used to derive the upper bound on code rate. 
				We now proceed to derive an upper bound on $1 - \frac{m}{n}$.
				It can be seen that the matrix $H_1$, after permutation of rows and columns can be written in the following form
				\bea
				H_1 = \left[
				\scalemath{1}{
					\begin{array}{c|c|c|c|c|c|c|c|c}
						D_0 & A_1 & 0 & 0 & \hdots & 0 & 0 & 0 & \\
						\cline{1-8}
						0 & D_1 & A_2 & 0 & \hdots & 0 & 0 & 0 & \\
						\cline{1-8}
						0 & 0 & D_2 & A_3 & \hdots & 0 & 0 & 0 & \\
						\cline{1-8}
						0 & 0 & 0 & D_3 & \hdots & 0 & 0 & 0 &  \\
						\cline{1-8}
						\vdots & \vdots & \vdots & \vdots & \hdots & \vdots & \vdots & \vdots & D\\
						\cline{1-8}
						0 & 0 & 0 & 0 & \hdots & A_{\frac{t}{2}-2} & 0 & 0 & \\
						\cline{1-8}
						0 & 0 & 0 & 0 & \hdots & D_{\frac{t}{2}-2} & A_{\frac{t}{2}-1} & 0 & \\
						\cline{1-8}
						0 & 0 & 0 & 0 & \hdots & 0 & D_{\frac{t}{2}-1} &  & \\
						\cline{1-7}
						0 & 0 & 0 & 0 & \hdots & 0 & 0 & C & \\
					\end{array} 
				}
				\right], \ \ \label{Hform}
				\eea
%				
%				where 
%				\begin{itemize}
%					\item the rows of $H_1$ are labeled ${1,2,...,m}$ and columns are labeled ${1,2,...n}$,
%					\item the matrices $\{A_i, 1\leq i \leq \frac{t}{2}-1\}$ have dimensions $\rho_{i-1} \times a_i$, the matrices $\{D_i, 0 \leq i \leq \frac{t}{2}-1\}$ have dimensions $\rho_{i} \times a_i$,
%					\item $D_0$ is a $\rho_0 \times a_0$ matrix with each column having Hamming weight 1 and each row having Hamming weight at least 1,
%					\item  $\{A_i\}$,$\{D_i\}$,$\{B_i:= \left[\frac{A_i}{D_i}\right]\}$ are matrices such that each column of $B_i$ has Hamming weight 2, each column of $A_i$ has Hamming weight at least 1 and each row of $D_i$ has Hamming weight at least 1,
%					\item $C$ is a matrix with each column having Hamming weight 2,
%					\item $D$ is a matrix with each column having Hamming weight at least 3.
%				\end{itemize}
				where 
				\begin{enumerate}
					\item the rows of $H_1$ are labeled ${1,2,...,m}$ and columns are labeled ${1,2,...n}$,
					\item $A_i$ is a $\rho_{i-1} \times a_i$ matrix for $1\leq i \leq \frac{t}{2}-1$, $D_i$ is a $\rho_{i} \times a_i$ matrix for $0 \leq i \leq \frac{t}{2}-1$ for some $\{\rho_i\},\{a_i\}$,
					\item $D_0$ is a matrix with each column having weight(Hamming weight) 1 and each row having weight at least 1,
					\item  $\{A_i\}$,$\{D_i\}$,$\{B_i= \left[\frac{A_i}{D_i}\right]\}$ are  such that for $1\leq i \leq \frac{t}{2}-1$ each column of $B_i$ has weight 2, each column of $A_i$ has weight at least 1 and each row of $D_i$ has weight at least 1 and each column of $D_i$ has weight at most 1,
					\item $C$ is a matrix with each column having weight 2, $D$ is a matrix which exactly contains all the columns of $H_1$ which have weight $\geq 3$,
					\item If $J$ is the first index such that $A_J, D_J$ are empty matrices ($0 \times L$, $L \times 0$, $0 \times 0$ matrix) ($J=0$ if $D_0$ is an empty matrix) then we set $A_i,D_i$ to be empty matrices for all $J \leq i \leq \frac{t}{2}-1$ and set $a_i=0, \rho_i=0, \forall J \leq i \leq \frac{t}{2}-1$.  If none of $D_0,A_i,D_i$,$1\leq i \leq \frac{t}{2}-1$ are empty matrices, then we simply set $J = \frac{t}{2}$.
					If $J < \frac{t}{2}$, let $\alpha_1,...,\alpha_l \in \{\sum_{j=0}^{J-1}a_j+1,...,n\}$ be the indices of 2-weight columns in $H_1$ apart from the 2-weight columns corresponding to columns of $B_1,...,B_{J-1}$. We take the submatrix of $H_1$ with rows indexed by $\sum_{j=0}^{J-1}{\rho_j}+1,...,m$ and columns indexed by $\alpha_1,...,\alpha_l$ and name the submatrix as matrix $C$. Let the number of columns in $C$ be $a_{\frac{t}{2}}$. If $C$ is an empty matrix then we set $a_{\frac{t}{2}}=0$. If $J < \frac{t}{2}$, the matrix $H_1$ can be written in the form \eqref{Hform_111}
				\end{enumerate}
					\bea
					H_1 = \left[
					\scalemath{1}{
						\begin{array}{c|c|c|c|c|c|c|c|c}
							D_0 & A_1 & 0 & 0 & \hdots & 0 & 0 & 0 & \\
							\cline{1-8}
							0 & D_1 & A_2 & 0 & \hdots & 0 & 0 & 0 & \\
							\cline{1-8}
							0 & 0 & D_2 & A_3 & \hdots & 0 & 0 & 0 & \\
							\cline{1-8}
							0 & 0 & 0 & D_3 & \hdots & 0 & 0 & 0 &  \\
							\cline{1-8}
							\vdots & \vdots & \vdots & \vdots & \hdots & \vdots & \vdots & \vdots & D\\
							\cline{1-8}
							0 & 0 & 0 & 0 & \hdots & A_{J-2} & 0 & 0 & \\
							\cline{1-8}
							0 & 0 & 0 & 0 & \hdots & D_{J-2} & A_{J-1} & 0 & \\
							\cline{1-8}
							0 & 0 & 0 & 0 & \hdots & 0 & D_{J-1} & 0 & \\
							\cline{1-8}
							0 & 0 & 0 & 0 & \hdots & 0 & 0 & C & \\
							\end{array} 
							}
							\right], \ \ \label{Hform_111}
							\eea
%				where 
%				\begin{itemize}
%%					\item the rows of $H_1$ are labeled ${1,2,...,m}$ and columns are labeled ${1,2,...n}$,
%%					\item the matrix $A_i$ has dimension $\rho_{i-1} \times a_i$ for $1\leq i \leq \frac{t}{2}-1$, the matrix $D_i$ has dimension $\rho_{i} \times a_i$ for  $0 \leq i \leq \frac{t}{2}-1$,
%%					\item $D_0$ is a $\rho_0 \times a_0$ matrix with each column having Hamming weight 1 and each row having Hamming weight at least 1,
%%					\item  $\{A_i\}$,$\{D_i\}$,$\{B_i:= \left[\frac{A_i}{D_i}\right]\}$ are matrices such that each column of $B_i$ has Hamming weight 2, each column of $A_i$ has Hamming weight at least 1 and each row of $D_i$ has Hamming weight at least 1,
%%					\item $C$ is a matrix with each column having Hamming weight 2,
%%					\item $D$ is a matrix with each column having Hamming weight at least 3,	
%					\item If $J$ is the first index such that $A_J, D_J$ ($J=0$ if $D_0$ is an empty matrix) are empty matrices ($0 \times L$, $L \times 0$, $0 \times 0$ matrix) then we set $A_i,D_i$ to be empty matrices for all $\frac{t}{2}-1 \geq i \geq J$ and set $a_i=0, \rho_i=0, \frac{t}{2}-1 \geq i \geq J$ and place all of the 2-weight columns apart from the 2-weight columns corresponding to $B_1$ to $B_{J-1}$ in $C$. Let the number of columns in $C$ be $a_{\frac{t}{2}}$. If $C$ has no columns then we set $a_{\frac{t}{2}}=0$.
%				\end{itemize}
				The entire rate bound derivation is correct and all the bounds will hold with $a_i=0, \rho_i=0, J \leq i \leq \frac{t}{2}-1$. If $l$  $(<J)$ is the first index such that $D_l$ is an empty matrix with $A_l$ a non empty matrix then the proof of the following claim \ref{claim1} (since in the proof of claim \ref{claim1}, we prove the claim for $A_l$ first and then proceed to $D_l$ and since $D_0,A_j,D_j$ must be non empty $\forall j < l$) will imply that $A_l$ has column weight 1 which will imply $D_l$ cannot be empty. Hence the case $A_l$, a non empty matrix and $D_l$, an empty matrix cannot occur.
				Although we have to prove the following claim \ref{claim1} for $A_i,D_i$, $1 \leq i \leq J-1$, $D_0$, we assume all $D_0,$ $A_i,D_i$, $1 \leq i \leq \frac{t}{2}-1$ to be non-empty matrices and prove the claim. Since the proof is by induction, the induction can be made to stop at $J-1$ (induction starts at 0) and the proof is unaffected by it.
				\begin{claim}\label{claim1}
					For $1 \leq i \leq \frac{t}{2}-1$, $A_i$ is a matrix with each column having weight 1 and
					for $0 \leq j \leq \frac{t}{2}-1$, $D_j$ is a matrix with each row and each column having weight 1. 
				\end{claim}
%				\begin{claim}\label{claim1}
%					
%					$\{A_i\}$ are matrices with each column having Hamming weight 1 and
%					$\{D_i\}$ are matrices with each row and each column having Hamming weight 1. 
%					
%					
%				\end{claim}
				\begin{proof}
					We use the fact that $d_{min}(\mathcal{C}) \geq t+1$ to prove the above claim. Proof is given in Appendix \ref{AppendixA}.
				\end{proof}
				
				By Claim \ref{claim1}, after permutation of columns of $H_1$ (in \eqref{Hform} or \eqref{Hform_111} depending on $J$) within the columns labeled by the set $\{\sum_{l=0}^{j-1}a_l+1,...\sum_{l=0}^{j-1}a_l+a_{j}\}$ for $0 \leq j \leq J-1$, the matrix $D_j,0 \leq j \leq J-1$ can be assumed to be a diagonal matrix with non-zero entries along the diagonal and hence $\rho_{i}=a_{i}$, $\forall 0 \leq i \leq \frac{t}{2}-1$. 
				
				Since the sum of the weights of the columns of $A_i, 1 \leq i \leq \frac{t}{2}-1$ must equal the sum of the row weights and since each row of $A_i$ for $i \leq J-1$ can have weight at most $r$ and not $r+1$ due to weight one rows in $D_{i-1}$, and since for $\frac{t}{2}-1 \geq i \geq J$, $A_i$ is an empty matrix and we have set $a_i=0$, we obtain:\\
				\bea
				\text{For } 1 \leq i \leq \frac{t}{2}-1 : & & \notag \\
				\rho_{i-1} r & \geq & a_i, \notag \\
				a_{i-1} r & \geq & a_i. \label{Ineq1}
				\eea
				For some $p \geq 0$,\\
				\bea
				\sum_{i=0}^{\frac{t}{2}-1} a_i + p =m.  \label{Ineq2}
				\eea 
					By equating sum of row weights of $C$, with sum of column weights of $C$, we obtain:
				\bea
				2 a_{\frac{t}{2}} & \leq & (a_{\frac{t}{2}-1}+p) (r+1)-a_{\frac{t}{2}-1}. \label{Ineq3}	    
				\eea
				 Note that if $C$ is an empty matrix then also the inequality \eqref{Ineq3} is true as we would have set $a_{\frac{t}{2}}=0$. If $J < \frac{t}{2}$ and $C$ a non-empty matrix then the number of rows in $C$ is $p$ with each column of $C$ having weight 2, hence the inequality \eqref{Ineq3} is still true.\\
				Substituting \eqref{Ineq2} in \eqref{Ineq3} we get:
				\bea
				2 a_{\frac{t}{2}} & \leq & (m-\sum_{i=0}^{\frac{t}{2}-2} a_i) (r+1)-(m-\sum_{i=0}^{\frac{t}{2}-2} a_i-p), \notag \\
				2 a_{\frac{t}{2}} & \leq & (m-\sum_{i=0}^{\frac{t}{2}-2} a_i) r+p. \label{Ineq4}
				\eea
				
				By equating sum of row weights of $H_1$, with sum of column weights of $H_1$, we obtain:
				\bea
				m(r+1) & \geq & a_0 + 2(\sum_{i=1}^{\frac{t}{2}} a_i) + 3 (n-\sum_{i=0}^{\frac{t}{2}} a_i), \label{Ineq_star1} 
				\eea
				If $J < \frac{t}{2}$ then $a_i=0$, $\forall J \leq i \leq \frac{t}{2}-1$. If $C$ is an empty matrix then $a_{\frac{t}{2}}=0$. Hence the inequality \eqref{Ineq_star1} is true irrespective of whether $J=\frac{t}{2}$ or $J< \frac{t}{2}$ (even if $C$ is an empty matrix).\\
				\bea
				\text{From \eqref{Ineq_star1} :} & &  \notag \\
				m(r+1) & \geq & 3n-2a_0 -(\sum_{i=1}^{\frac{t}{2}} a_i). \label{Ineq5} 
				\eea
				
				Our basic inequalities are \eqref{Ineq1},\eqref{Ineq2},\eqref{Ineq3},\eqref{Ineq_star1}. We manipulate these 4 inequalities to derive the bound on rate.
				
				Substituting \eqref{Ineq2} in \eqref{Ineq5} we get:
				\bea
				m(r+1) & \geq & 3n-a_0 -a_{\frac{t}{2}}-(m-p), \notag \\
				m(r+2) & \geq & 3n+p-a_0 -a_{\frac{t}{2}}. \label{Ineq6} 
				\eea
				Substituting \eqref{Ineq4} in \eqref{Ineq6}, we get:
				\bea
				m(r+2) & \geq & 3n+p-a_0 - \left( \frac{(m-\sum_{i=0}^{\frac{t}{2}-2} a_i) r+p}{2} \right), \notag \\
				m(r+2+\frac{r}{2}) & \geq & 3n+\frac{p}{2}-a_0 + \left(\sum_{i=0}^{\frac{t}{2}-2} a_i \right) \frac{r}{2}. \label{Ineq7}
				\eea
				
				From \eqref{Ineq2}, for any $0 \leq j \leq \frac{t}{2}-2$:
				\bea
				a_{\frac{t}{2}-2-j} = m-\sum_{i=0}^{\frac{t}{2}-2-j-1} a_i - \sum_{i=\frac{t}{2}-2-j+1}^{\frac{t}{2}-1} a_i - p. \label{Ineq10}
				\eea
				Subtituting \eqref{Ineq1} for $\frac{t}{2}-2-j+1 \leq i \leq \frac{t}{2}-1$ in \eqref{Ineq10}, we get: 
				\bea
				a_{\frac{t}{2}-2-j} & \geq & m-\sum_{i=0}^{\frac{t}{2}-2-j-1} a_i - \sum_{i=\frac{t}{2}-2-j+1}^{\frac{t}{2}-1} a_{\frac{t}{2}-2-j} r^{i-(\frac{t}{2}-2-j)} - p, \notag \\
				a_{\frac{t}{2}-2-j} & \geq & m-\sum_{i=0}^{\frac{t}{2}-2-j-1} a_i - \sum_{i=1}^{j+1} a_{\frac{t}{2}-2-j} r^{i} - p, \notag \\
				a_{\frac{t}{2}-2-j} & \geq & \frac{m-\sum_{i=0}^{\frac{t}{2}-2-j-1} a_i-p}{ 1+ \sum_{i=1}^{j+1} r^{i} }. \label{Ineq11}
				\eea
				Let\\
				\bea
				\delta_0 & = & \frac{r}{2}, \label{Ineq12} \\
				\delta_{j+1} & = & \delta_j - \frac{\delta_j}{1+\sum_{i=1}^{j+1}r^i}. \label{Ineq13}
				\eea
				Let us prove the following inequality by induction for $0 \leq J_1 \leq \frac{t}{2}-2$,
				\bea
				m(r+2+\delta_{J_1}) & \geq & 3n + p\left( \frac{1}{2}+\delta_{J_1} - \frac{r}{2}\right)-a_0  + \left(\sum_{i=0}^{\frac{t}{2}-2-J_1} a_i \right) \delta_{J_1}. \label{Ineq9}
				\eea
				\eqref{Ineq9} is true for $J_1=0$ by \eqref{Ineq7}. Hence \eqref{Ineq9} is proved for $t=4$ and the range of $J_1$ is vacuous for $t=2$. Hence assume $t>4$. Hence let us assume \eqref{Ineq9} is true for $ J_1$ such that $ \frac{t}{2}-3 \geq J_1 \geq 0$ and prove it for $J_1+1$. 
				Substituting \eqref{Ineq11} for $j=J_1$ in \eqref{Ineq9}, we get:
				\bea
				m(r+2+\delta_{J_1}) & \geq & 3n + p\left( \frac{1}{2}+\delta_{J_1} - \frac{r}{2}\right)-a_0  + \left(\sum_{i=0}^{\frac{t}{2}-2-J_1-1} a_i \right) \delta_{J_1} + \left( \frac{m-\sum_{i=0}^{\frac{t}{2}-2-J_1-1} a_i-p}{ 1+ \sum_{i=1}^{J_1+1} r^{i} } \right) \delta_{J_1}, \notag \\
				m(r+2+\delta_{J_1}-\frac{\delta_{J_1}}{1+ \sum_{i=1}^{J_1+1} r^{i} }) & \geq & 
				3n + p\left( \frac{1}{2}+\delta_{J_1} -\frac{\delta_{J_1}}{1+ \sum_{i=1}^{J_1+1} r^{i} }- \frac{r}{2}\right)-a_0  + \notag \\ 
				& & \left(\sum_{i=0}^{\frac{t}{2}-2-J_1-1} a_i \right) \left(\delta_{J_1} - \frac{\delta_{J_1}}{1+ \sum_{i=1}^{J_1+1} r^{i} }\right). \label{Ineq141}
				\eea
				Substituing \eqref{Ineq13} in \eqref{Ineq141}, we obtain
				\bea
				m(r+2+\delta_{J_1+1}) & \geq & 
				3n + p\left( \frac{1}{2}+\delta_{J_1+1}- \frac{r}{2}\right)-a_0  + \left(\sum_{i=0}^{\frac{t}{2}-2-J_1-1} a_i \right) \delta_{J_1+1}. \label{Ineq14}
				\eea
				Hence \eqref{Ineq9} is proved  for any $0 \leq J_1 \leq \frac{t}{2}-2$ for $t \geq 4$. Hence writing \eqref{Ineq9} for $J_1 = \frac{t}{2}-2$ for $t\geq 4$, we obtain:
				\bea
				m(r+2+\delta_{\frac{t}{2}-2}) & \geq & 3n + p\left( \frac{1}{2}+\delta_{\frac{t}{2}-2} - \frac{r}{2}\right)-a_0  + (a_0) \delta_{\frac{t}{2}-2}, \notag \\
				m(r+2+\delta_{\frac{t}{2}-2}) & \geq & 3n + p\left( \frac{1}{2}+\delta_{\frac{t}{2}-2} - \frac{r}{2}\right)+ a_0 (\delta_{\frac{t}{2}-2}-1). \label{Ineq15} 
				\eea
				It can be seen that $\delta_j$ for $r \geq 2$ has a product form as:
				\bea
				\delta_j = \frac{r}{2}\left(\frac{r^{j+1}-r^{j}}{r^{j+1}-1}\right). \label{Ineq16}
				\eea
				Hence for $r \geq 3$, $t \geq 4$:
				\bean
				\delta_{\frac{t}{2}-2}=\frac{r}{2}\left(\frac{r^{\frac{t}{2}-1}-r^{\frac{t}{2}-2}}{r^{\frac{t}{2}-1}-1}\right) \geq 1.
				\eean
				Hence we can substitute \eqref{Ineq11} for $j=\frac{t}{2}-2$ in \eqref{Ineq15} :
				\bea
				m(r+2+\delta_{\frac{t}{2}-2}) & \geq & 3n + p\left( \frac{1}{2}+\delta_{\frac{t}{2}-2} - \frac{r}{2}\right)+ \left( \frac{m-p}{1+\sum_{i=1}^{\frac{t}{2}-1}r^i} \right) (\delta_{\frac{t}{2}-2}-1), \notag \\
				m(r+2+\delta_{\frac{t}{2}-2}-\frac{\delta_{\frac{t}{2}-2}}{1+\sum_{i=1}^{\frac{t}{2}-1}r^i}+\frac{1}{1+\sum_{i=1}^{\frac{t}{2}-1}r^i}) & \geq & 3n + p\left( \frac{1}{2}+\delta_{\frac{t}{2}-2}-\frac{\delta_{\frac{t}{2}-2}}{1+\sum_{i=1}^{\frac{t}{2}-1}r^i} + \frac{1}{1+\sum_{i=1}^{\frac{t}{2}-1}r^i}- \frac{r}{2}\right). \label{Ineq17}
				\eea
				Substituting \eqref{Ineq13} in \eqref{Ineq17}, we obtain:
				\bea
				m\left(r+2+\delta_{\frac{t}{2}-1}+ \frac{1}{1+\sum_{i=1}^{\frac{t}{2}-1}r^i}\right) & \geq & 3n + p\left( \frac{1}{2}+\delta_{\frac{t}{2}-1}+ \frac{1}{1+\sum_{i=1}^{\frac{t}{2}-1}r^i}- \frac{r}{2}\right). \label{Ineq18}
				\eea
				Using \eqref{Ineq16}, we obtain:
				\bean
				\left( \frac{1}{2}+\delta_{\frac{t}{2}-1}+ \frac{1}{1+\sum_{i=1}^{\frac{t}{2}-1}r^i}- \frac{r}{2}\right) \geq 0.
				\eean
				Hence \eqref{Ineq18} implies:
				\bea
				m\left(r+2+\delta_{\frac{t}{2}-1}+ \frac{1}{1+\sum_{i=1}^{\frac{t}{2}-1}r^i}\right) & \geq & 3n.  \label{Ineq19}
				\eea
				\eqref{Ineq19} after some algebraic manipulations gives the required bound on $1-\frac{m}{n}$ and hence on $\frac{k}{n}$ as stated in the theorem. Note that although the derivation is valid for $r\geq 3$, $t \geq 4$, the final bound given in the theorem is correct and tight for $t =2$. The bound for $t=2$ can be derived specifically by substituting $a_0 \leq m$ in \eqref{Ineq7} and noting that $p \geq 0$.
			\end{proof}
			An alternative proof for Theorem \ref{rate_teven} is given below by using linear programming:
			\begin{proof}
				The inequalities \eqref{Ineq1},\eqref{Ineq3} and \eqref{Ineq5} are linear inequalities and are written in matrix form as:\\
				\begin{equation*}
				A\mathbf{\underline{x}} \geq \mathbf{\underline{b}}
				\end{equation*}
				where
				\bea
				A & = & \left[
				\begin{array}{c c c c c c c c}
					r & -1 & 0 & \hdots & 0 & 0 & 0 & 0 \\
					0 & r & -1 & \hdots & 0 & 0 & 0 & 0 \\
					\vdots & \vdots & \vdots & \ddots & \vdots & \vdots & \vdots & \vdots \\
					0 & 0 & 0 & \hdots & r & -1 & 0 & 0 \\
					0 & 0 & 0 & \hdots & 0 & r & -2 & (r+1) \\
					(r+3) & (r+2) & (r+2) & \hdots & (r+2) & (r+2) & 1 & (r+1)
				\end{array} \right] \label{Amat}
				\eea
				which is a $(\frac{t}{2} + 1) \times (\frac{t}{2} + 2) $ matrix and
				\bea 
				\mathbf{\underline{x}} = \left[ \begin{array}{c c c c c}
					 a_0 & a_1 & \hdots & a_{\frac{t}{2}} & p
				\end{array} \right]^T, \mathbf{\underline{b}} = \left[ \begin{array}{c c c c c}
					0 &	0 &	\hdots & 0 & 3n
				\end{array}\right]^T
				\eea
				where $\mathbf{\underline{x}}$ is a $(\frac{t}{2} + 2) \times 1$ matrix and $\mathbf{\underline{b}}$ is a $(\frac{t}{2} + 1) \times 1$ matrix
				The problem of finding an upper bound on rate of the code now becomes one of minimizing $m = \mathbf{\underline{c}}^T\mathbf{\underline{x}}$, which is a linear objective function where $\mathbf{\underline{c}} = \left[ \begin{array}{c c c c c c}
				1 & 1 & \hdots & 1 & 0 & 1
				\end{array}\right]^T$ is a $(\frac{t}{2} + 2) \times 1$ matrix. Also by definition of $\mathbf{\underline{x}}$, $\mathbf{\underline{x}} \geq 0$. This is now in a standard form of a linear program formulation as:
				\begin{align*}
				\text{minimize } & \mathbf{\underline{c}}^T\mathbf{\underline{x}}\\
				\text{s.t. } A\mathbf{\underline{x}} & \geq \mathbf{\underline{b}}, \mathbf{\underline{x}} \geq 0
				\end{align*}
				The dual problem of the above is
				\begin{align*}
				\text{maximize } & \mathbf{\underline{b}}^T\mathbf{\underline{\lambda}}\\
				\text{s.t. } A^T\mathbf{\underline{\lambda}} & \leq \mathbf{\underline{c}}, \mathbf{\underline{\lambda}} \geq 0
				\end{align*}
				We will solve the dual problem by writing it in standard $\text{minimize}$ $-\mathbf{\underline{b}}^T\underline{\lambda}$ form and using the simplex method. Let us introduce slack variables $s_1$,...,$s_{\frac{t}{2}+2}$ and re-write the constraints as
				\begin{align*}
				B\mathbf{\underline{v}} = \mathbf{\underline{c}}, \text{ }\mathbf{\underline{v}} \geq 0,
				\end{align*} 
				where
				\bea
				B & = & \left[
				\begin{array}{c c c c c c c c c c c c c}
					r & 0 & 0 & \hdots & 0 & (r+3) & 0 & 1 & 0 & 0 & \hdots & 0 & 0 \\
					-1 & r & 0 & \hdots & 0 & (r+2) & 0 & 0 & 1 & 0 &\hdots & 0 & 0\\
					0 & -1 & r & \hdots & 0 & (r+2) & 0 & 0 & 0 & 1 &\hdots & 0 & 0\\
					\vdots & \vdots & \ddots & \ddots & \vdots & \vdots & \vdots & \vdots & \vdots & \vdots & \ddots & \vdots & \vdots\\
					0 & 0 & \hdots & -1 & r & (r+2) & 0 & 0 & 0 & 0 & \hdots & 1 & 0 \\
					0 & 0 & 0 & \hdots & -2 & 1 & 0 & 0 & 0 & 0 & \hdots & 0 & 1 \\
					0 & 0 & 0 & \hdots & (r+1) & (r+1) & 1 & 0 & 0 & 0 & \hdots & 0 & 0 \\ 
				\end{array} \right] \label{Amat},
				\eea
				and 
				\bea 
				\mathbf{\underline{v}} = \left[ \begin{array}{c c c c c c c}
					\lambda_1 & \lambda_2 & \hdots & \lambda_{\frac{t}{2}+1} & s_1 & \hdots & s_{\frac{t}{2}+2}
				\end{array} \right]^T.
			\eea
			With this, the objective function now is $\mathbf{\underline{d}}^T\mathbf{\underline{v}}$, where $\mathbf{\underline{d}} = \left[\begin{array}{c c c c c}
			-\mathbf{\underline{b}}^T & 0 & 0 & \hdots & 0
			\end{array} \right]^T$ which is a $(t+3) \times 1$ matrix.
		 Define $\mathbf{\underline{v}}_{BV}=[\lambda_1,...,\lambda_{\frac{t}{2}+1},s_1]^T$.
			
			We pick the variables $\beta_1=\lambda_1$,...,$\beta_{\frac{t}{2}+1}=\lambda_{\frac{t}{2}+1}$, $\beta_{\frac{t}{2}+2}=s_1$ as ``basic variables" and the rest, called ``non-basic variables" $\alpha_1=s_2,...,\alpha_{\frac{t}{2}+1}=s_{\frac{t}{2}+2}$ will be set to $0$. The set of basic variables is chosen such that the matrix formed by the columns of $B$ corresponding to these basic variables is a full-rank square matrix $B_{BV}$. The remaining columns of $B$ will give a matrix $B_{NBV}$. The system of equations is now in the following form:
			\begin{align*}
			\left[\begin{array}{c c}
			B_{BV} & B_{NBV}
			\end{array}\right]
			\left[\begin{array}{c}
			\mathbf{\underline{v}}_{BV} \\
			\hline
			\mathbf{\underline{0}}
			\end{array}
			\right] = \mathbf{\underline{c}}
			\end{align*}
			Therefore we will equivalently solve
			\begin{align*}
			B_{BV}\mathbf{\underline{v}}_{BV} = \mathbf{\underline{c}}
			\end{align*}
			The above system of equations can be solved in closed form to get the following:
			\begin{align}
			\lambda_{\frac{t}{2}+1} & = \frac{2\sum_{i = 0}^{\frac{t}{2}-1}r^i}{3(r^{\frac{t}{2}}+2\sum_{i = 0}^{\frac{t}{2}-1}r^i)}, \label{LP1} \\
			s_1 & = \frac{1}{r^{\frac{t}{2}}+2\sum_{i = 0}^{\frac{t}{2}-1}r^i}, \label{LP2}\\
			\lambda_{j+1} & = \frac{r^{\frac{t}{2}} - 3r^{\frac{t}{2}-(j+1)} + 2}{3(r-1)(r^{\frac{t}{2}}+2\sum_{i = 0}^{\frac{t}{2}-1}r^i)},\text{   for }0 \leq j \leq \frac{t}{2}-1 \label{LP3}
			\end{align}
			which are non-negative if $r \geq 3$. Hence the solution given by \eqref{LP1},\eqref{LP2},\eqref{LP3} is a basic feasible solution. Let the elements of the vector $\mathbf{\underline{d}}$ be indexed by the elements of the vector $\mathbf{\underline{v}}$ i.e., $i^{th}$ component of $\mathbf{\underline{d}}$ is indexed by $i^{th}$ component of $\mathbf{\underline{v}}$. To check for optimality we check if the ``reduced cost coefficients" $r_{\alpha_i} = d_{\alpha_i} - z_{\alpha_i}$ are non-negative, for every non-basic variable $\alpha_i,1\leq i \leq \frac{t}{2}+1$. We note that for the above made choice of basic and non-basic variables, in the vector $\mathbf{\underline{d}}$ only $d_{\beta_{\frac{t}{2}+1}} = -3n$ is non-zero. The quantity $z_{\alpha_i}$ is defined as follows:
			\begin{equation*}
			z_{\alpha_i} = \sum_{j = 1}^{{\frac{t}{2}+2}}d_{\beta_j}y_{(j,\alpha_i)} = d_{\beta_{\frac{t}{2}+1}}y_{(\frac{t}{2}+1,\alpha_i)} = -3ny_{(\frac{t}{2}+1,\alpha_i)}
			\end{equation*}
			where $y_{(\frac{t}{2}+1,\alpha_i)}$ are as shown in the row reduced echelon form of matrix B below:
			\begin{align*}
			B_{rref} = \left[\begin{array}{c c c c c c c c c c c c c}
			1 & 0 & 0 & \hdots & 0 & 0 & 0 & y_{(1,\alpha_1)} & y_{(1,\alpha_2)} & y_{(1,\alpha_3)} & \hdots & y_{(1,\alpha_{\frac{t}{2}})} & y_{(1,\alpha_{\frac{t}{2}+1})} \\
			0 & 1 & 0 & \hdots & 0 & 0 & 0 & y_{(2,\alpha_1)} & y_{(2,\alpha_2)} & y_{(2,\alpha_3)} & \hdots & y_{(2,\alpha_{\frac{t}{2}})} & y_{(2,\alpha_{\frac{t}{2}+1})}\\
			0 & 0 & 1 & \hdots & 0 & 0 & 0 & y_{(3,\alpha_1)} & y_{(3,\alpha_2)} & y_{(3,\alpha_3)} & \hdots & y_{(3,\alpha_{\frac{t}{2}})} & y_{(3,\alpha_{\frac{t}{2}+1})}\\
			\vdots & \vdots & \vdots & \ddots & \vdots & \vdots & \vdots & \vdots & \vdots & \vdots & \ddots & \vdots & \vdots\\
			0 & 0 & 0 & \hdots & 1 & 0 & 0 & y_{(\frac{t}{2},\alpha_1)} & y_{(\frac{t}{2},\alpha_2)} & y_{(\frac{t}{2},\alpha_3)} & \hdots & y_{(\frac{t}{2},\alpha_{\frac{t}{2}})} & y_{(\frac{t}{2},\alpha_{\frac{t}{2}+1})}\\
			0 & 0 & 0 & \hdots & 0 & 1 & 0 & y_{(\frac{t}{2}+1,\alpha_1)} & y_{(\frac{t}{2}+1,\alpha_2)} & y_{(\frac{t}{2}+1,\alpha_3)} & \hdots & y_{(\frac{t}{2}+1,\alpha_{\frac{t}{2}})} & y_{(\frac{t}{2}+1,\alpha_{\frac{t}{2}+1})}\\
			0 & 0 & 0 & \hdots & 0 & 0 & 1 & y_{(\frac{t}{2}+2,\alpha_1)} & y_{(\frac{t}{2}+2,\alpha_2)} & y_{(\frac{t}{2}+2,\alpha_3)} & \hdots & y_{(\frac{t}{2}+2,\alpha_{\frac{t}{2}})} & y_{(\frac{t}{2}+2,\alpha_{\frac{t}{2}+1})}\\ 
			\end{array}
			\right]
			\end{align*}
			We observe that in $B$ to $B_{rref}$, to row-$(\frac{t}{2}+1)$ only non-negative linear combinations of the rows above it are added, entries of which are either $0$ or $1$. Therefore $r_{\alpha_i} \geq 0$ for $\alpha_i$ all non-basic variables. Hence $y_{(\frac{t}{2}+1,\alpha_1)},...,y_{(\frac{t}{2}+1,\alpha_{\frac{t}{2}+1})} \geq 0$. Hence the basic solution given by \eqref{LP1}, \eqref{LP2} and \eqref{LP3} is an ``optimal basic feasible" solution.\\
			By the theorem of strong duality the optimal solutions of the primal problem and the dual problem are equal. Therefore the minimum value of $m$ is $3n\lambda_{\frac{t}{2}+1} = \frac{n2\sum_{i = 0}^{\frac{t}{2}-1}r^i}{(r^{\frac{t}{2}}+2\sum_{i = 0}^{\frac{t}{2}-1}r^i)}$.\\
			Hence we get the upper bound on the rate:
			\begin{equation*}
			\frac{k}{n} \leq 1 - \frac{m}{n} \leq \frac{r^{\frac{t}{2}}}{r^{\frac{t}{2}} + 2 \sum\limits_{i=0}^{\frac{t}{2}-1} r^i}
			\end{equation*}
			We now pick a solution for the primal problem and show that it is feasible and gives the optimal objective function value.
			\begin{align*}
			a_i & = \frac{2nr^i}{r^{\frac{t}{2}}+2\sum_{i = 0}^{\frac{t}{2}-1}r^i},\text{ for }0 \leq i \leq \frac{t}{2}-1\\
			a_{\frac{t}{2}} & = \frac{nr^{\frac{t}{2}}}{r^{\frac{t}{2}}+2\sum_{i = 0}^{\frac{t}{2}-1}r^i},\text{   }
			p = 0.
			\end{align*}
			It is easy to check that this solution satisfies the first $\frac{t}{2}$ constraints of the primal problem with equality. It remains to check the following:
			\begin{equation*}
			(r+3)a_0+(r+2)\sum_{i = 1}^{\frac{t}{2}-1}a_i+a_{\frac{t}{2}} \geq 3n.
			\end{equation*}
			Upon simplification, it is seen that the above inequality is met with equality. Therefore the chosen solution is a feasible solution. It is also easy to check that the solution gives the optimal value of the objective function. Hence it is an optimal feasible solution. We thus conclude that a code having the above chosen values will have the optimal rate.
			\end{proof} 
			\subsection{Upper Bound on Rate of an $(n,k,r,t)_{seq}$ Code for $t$ Odd:}
			
			In this subsection we provide an upper bound on the rate of an $(n,k,r,t)_{seq}$ code whenever $t$ is a positive odd integer and $r \geq 3$. 
			
			\begin{thm}\label{thm_todd}
				Let $\mathcal{C}$ be an $(n,k,r,t)_{seq}$ code over a field $\mathbb{F}_q$. Let $t=2s-1, s \geq 1$ and $r \geq 3$. Then 
				
				\begin{equation}
				\frac{k}{n} \leq \frac{r^{s}}{r^{s} + 2 \sum\limits_{i=1}^{s-1} r^i + 1} \label{Thm2}.
				\end{equation}
				
			\end{thm}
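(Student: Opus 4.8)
The plan is to run the same argument as for Theorem~\ref{rate_teven}, the only genuine change being that the block staircase now has $s$ diagonal layers and that its terminal layer is modified to reflect that $t=2s-1$ is odd.

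First I would set $\mathcal{B}_0 = \text{span}(\underline{c}\in\calcp : w_H(\underline{c})\le r+1)$, let $m=\dim\mathcal{B}_0$, and build the $m\times n$ matrix $H_1$ whose rows form a weight-$(\le r+1)$ basis of $\mathcal{B}_0$; exactly as before $H_1$ is a parity-check matrix of an $(n,n-m,r,t)_{\text{seq}}$ code, so $\frac{k}{n}\le 1-\frac{m}{n}$ and it suffices to lower bound $\frac{m}{n}$. Ordering the columns greedily by weight (weight $1$ first, then weight $2$, then the rest), $H_1$ can be permuted into a block-staircase form analogous to \eqref{Hform} with diagonal blocks $D_0,D_1,\dots,D_{s-1}$ linked by $A_1,\dots,A_{s-1}$, followed by the block $D$ of columns of weight $\ge 3$. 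The structural difference from the even case lies in the terminal layer: a worst-case length-$t$ erasure chain with $t$ odd is resolved by recovering $s$ erased symbols from one side and $s-1$ from the other, so the chain closes at a single coordinate (a weight-$1$ column of the last diagonal block) rather than at an edge (a weight-$2$ column), and the accounting of this last layer contributes one unit where the even case contributed the pair responsible for the term $2r^0=2$; this is precisely the origin of the ``$+1$'' in \eqref{Thm2}.

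Next I would establish the odd-$t$ analog of Claim~\ref{claim1}: using $d_{\min}(\calc)\ge t+1=2s$, show that every column of each $A_i$ has Hamming weight $1$ and every row and every column of each $D_j$ has weight $1$, so each $D_j$ may be taken diagonal and $\rho_i=a_i$. Double-counting the nonzero entries then produces the basic inequalities: the staircase relations $a_{i-1}r\ge a_i$ for $1\le i\le s-1$ (analog of \eqref{Ineq1}), the row-count identity $\sum_{i=0}^{s-1}a_i+p=m$ (analog of \eqref{Ineq2}), a terminal inequality coming from the last layer (analog of \eqref{Ineq3}), and the global count $m(r+1)\ge a_0+2\sum_{i=1}^{s-1}a_i+3(n-\sum_i a_i)$ (analog of \eqref{Ineq_star1}).

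Finally I would close the bound exactly as in the linear-programming proof of Theorem~\ref{rate_teven}. Writing $D=r^{s}+2\sum_{i=1}^{s-1}r^{i}+1$, one minimizes $m=\sum_i a_i+p$ subject to the above linear constraints with $a_i,p\ge 0$; the candidate primal optimum is $a_0=\frac{n}{D}$, $a_i=\frac{2nr^{i}}{D}$ for $1\le i\le s-1$, the terminal block having $\frac{nr^{s}}{D}$ columns, and $p=0$, which one checks is feasible and meets the global constraint with equality, while the matching dual solution is the odd-$t$ analog of \eqref{LP1}--\eqref{LP3} and has all components nonnegative precisely when $r\ge 3$; strong duality then gives $\frac{m}{n}\ge\frac{D-r^{s}}{D}$, i.e.\ $\frac{k}{n}\le\frac{r^{s}}{D}$, which is \eqref{Thm2}. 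Alternatively one may bypass the LP and iterate the substitution of the staircase relations as in \eqref{Ineq9}--\eqref{Ineq19}, with a suitably redefined auxiliary sequence $\delta_j$ whose product form (the analog of \eqref{Ineq16}) makes the final substitution legitimate for $r\ge 3$. The cases $t=1$ (recovering $\frac{k}{n}\le\frac{r}{r+1}$) and $t=3$ (giving $\frac{k}{n}\le\frac{r^{2}}{(r+1)^{2}}$) are convenient sanity checks. The main obstacle is the terminal layer: pinning down the odd-$t$ forms of Claim~\ref{claim1} and of the terminal inequality so that the absent weight-$2$ middle block contributes exactly $+1$ rather than $+2$, and re-verifying the $r\ge 3$ positivity conditions (nonnegativity of the dual solution, equivalently validity of the last substitution in the $\delta_j$-route); the remaining algebra then runs in parallel with the even case.
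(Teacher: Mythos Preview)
There is a genuine gap, and it sits precisely where you flag the ``main obstacle'': the terminal layer.  Your stated odd-$t$ analogue of Claim~\ref{claim1} is too strong.  With $t=2s-1$ we only have $d_{\min}\ge 2s$, and the induction that forces the rows of $D_{i+1}$ to have weight $1$ needs a linear dependency of size at most $2(i+1)+2$ to contradict the minimum distance; for $i+1=s-1$ this asks for $2s\le t=2s-1$, which fails.  So one can prove that each $A_i$ ($1\le i\le s-1$) has weight-$1$ columns and that $D_0,\dots,D_{s-2}$ are diagonal, but \emph{not} that $D_{s-1}$ has weight-$1$ rows.  Consequently $\rho_{s-1}\ne a_{s-1}$ in general, the row-count identity is $\rho_{s-1}+\sum_{i=0}^{s-2}a_i+p=m$ (not $\sum_{i=0}^{s-1}a_i+p=m$), and a new inequality $a_{s-1}\le \rho_{s-1}(r+1)$ enters the system.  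If one incorrectly forces $\rho_{s-1}=a_{s-1}$, the resulting LP is exactly the even-$t$ LP with $t'=2s$, whose optimum $r^{s}/(r^{s}+2\sum_{i=0}^{s-1}r^{i})$ is strictly smaller than \eqref{Thm2}; since codes achieving \eqref{Thm2} exist, that bound is false, confirming that the extra structural assumption cannot hold.

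Your proposed primal optimum also witnesses the problem: $a_0=n/D$ and $a_1=2nr/D$ violate the very first staircase constraint $a_0 r\ge a_1$.  The correct primal (and the source of the ``$+1$'') carries $\rho_{s-1}$ as an independent variable: $a_i=nr^{i}(r+1)/D$ for $0\le i\le s-1$, $\rho_{s-1}=nr^{s-1}/D$, $a_s=0$, $p=0$, which satisfies $a_{s-1}=\rho_{s-1}(r+1)$ with equality.  Once you replace your Claim-\ref{claim1} analogue by the weaker statement (column-weight-$1$ only for $D_{s-1}$), add the inequality $a_{s-1}\le \rho_{s-1}(r+1)$, and use the corrected row-count, both the direct-substitution route and the LP route go through exactly as you outline; the $r\ge 3$ positivity checks and the $s=1,2$ sanity cases are then fine.
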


			\begin{proof}
				As earlier, we again begin by setting $\mathcal{B}_0 = \text{span}({\underline{c} \in \mathcal{C}^{\perp} : w_H(\underline{c}) \leq r+1})$.
				Let $m$ be the dimension of $\mathcal{B}_0$. Let ${\underline{c}_1,...\underline{c}_m}$ (row vectors) be a basis of $\mathcal{B}_0$ such that $w_H(\underline{c}_i) \leq r+1$.
				\bean
				\text{Let } H_1  & = &  \left[ 
				\begin{array}{c}
					\underline{c}_1 \\
					\vdots \\
					\underline{c}_m			
				\end{array}\right].
				\eean
				It follows that $H_1$ is a parity check matrix of an $(n,n-m,r,t)_{seq}$ code as its row space contains all the codewords of Hamming weight at most $r+1$ which are contained in $\mathcal{C}^\perp$. Also, 
				\bean
				\frac{k}{n} \leq 1 - \frac{m}{n}.
				\eean
%				\begin{equation}
%				\text{Let } \mathcal{B}_0 = \text{span}({c \in \mathcal{C}^{\perp} : w_H(c) \leq r+1}).\nonumber
%				\end{equation}
%				Let $m$ be the dimension of $\mathcal{B}_0$. Let ${c_1,...c_m}$ be a basis of $\mathcal{B}_0$ such that $w_H(c_i) \leq r+1$.
%				\bean
%				\text{Let } H_1  & = &  \left[ 
%				\begin{array}{c}
%					c_1 \\
%					\vdots \\
%					c_m			
%				\end{array}\right].
%				\eean
%				Now from the definition of an $(n,k,r,t)_{seq}$code, it can be easily seen that $H_1$ is a parity check matrix of an $(n,k,r,t)_{seq}$code as its row span contains all the codewords of Hamming weight at most $r+1$ which are in $\mathcal{C}^\perp$. Also, 
%				\bean
%				\frac{k}{n} \leq 1 - \frac{m}{n}.
%				\eean
				We now proceed to derive an upper bound on $1 - \frac{m}{n}$.
				Again it can be seen that the matrix $H_1$, after permutation of rows and columns can be written in the following form
				\bea
				H_1 & = & \left[
				\begin{array}{c|c|c|c|c|c|c|c|c}
					D_0 & A_1 & 0 & 0 & \hdots & 0 & 0 & 0 & \\
					\cline{1-8}
					0 & D_1 & A_2 & 0 & \hdots & 0 & 0 & 0 & \\
					\cline{1-8}
					0 & 0 & D_2 & A_3 & \hdots & 0 & 0 & 0 & \\
					\cline{1-8}
					0 & 0 & 0 & D_3 & \hdots & 0 & 0 & 0 &  \\
					\cline{1-8}
					\vdots & \vdots & \vdots & \vdots & \ddots & \vdots & \vdots & \vdots & D\\
					\cline{1-8}
					0 & 0 & 0 & 0 & \hdots & A_{s-2} & 0 & 0 & \\
					\cline{1-8}
					0 & 0 & 0 & 0 & \hdots & D_{s-2} & A_{s-1} & 0 & \\
					\cline{1-8}
					0 & 0 & 0 & 0 & \hdots & 0 & D_{s-1} &  & \\
					\cline{1-7}
					0 & 0 & 0 & 0 & \hdots & 0 & 0 & C & \\
				\end{array} \right] \label{Hform2},
				\eea
%				where 
%				\begin{enumerate}
%					\item the rows of $H_1$ are labeled ${1,2,...,m}$ and columns are labeled ${1,2,...n}$,
%					\item $A_i$ is a $\rho_{i-1} \times a_i$ matrix for $1\leq i \leq \frac{t}{2}-1$, $D_i$ is a $\rho_{i} \times a_i$ matrix for $0 \leq i \leq \frac{t}{2}-1$ for some $\{\rho_i\},\{a_i\}$,
%					\item $D_0$ is a matrix with each column having weight(Hamming weight) 1 and each row having weight at least 1,
%					\item  $\{A_i\}$,$\{D_i\}$,$\{B_i= \left[\frac{A_i}{D_i}\right]\}$ are  such that for $1\leq i \leq \frac{t}{2}-1$ each column of $B_i$ has weight 2, each column of $A_i$ has weight at least 1 and each row of $D_i$ has weight at least 1 and each column of $D_i$ has weight at most 1,
%					\item $C$ is a matrix with each column having weight 2, $D$ is a matrix which exactly contains all the columns of $H_1$ which have weight $\geq 3$,
%					\item If $J$ is the first index such that $A_J, D_J$ ($J=0$ if $D_0$ is an empty matrix) are empty matrices ($0 \times L$, $L \times 0$, $0 \times 0$ matrix) then we set $A_i,D_i$ to be empty matrices for all $J \leq i \leq \frac{t}{2}-1$ and set $a_i=0, \rho_i=0, J \leq i \leq \frac{t}{2}-1$ and place all of the 2-weight columns apart from the 2-weight columns corresponding to $B_1$ to $B_{J-1}$ in $C$. Let the number of columns in $C$ be $a_{\frac{t}{2}}$. If $C$ has no columns then we set $a_{\frac{t}{2}}=0$.
%				\end{enumerate}
				where 
				\begin{enumerate}
					\item the rows of $H_1$ are labeled ${1,2,...,m}$ and columns are labeled ${1,2,...n}$,
					\item $A_i$ is a $\rho_{i-1} \times a_i$ matrix for  $1\leq i \leq s-1$, $D_i$ is a $\rho_{i} \times a_i$ matrix for $0 \leq i \leq s-1$ for some $\{\rho_i\},\{a_i\}$,
					\item $D_0$ is a matrix with each column having weight 1 and each row having weight at least 1,
					\item  $\{A_i\}$,$\{D_i\}$,$\{B_i= \left[\frac{A_i}{D_i}\right]\}$ are  such that for $1\leq i \leq s-1$ each column of $B_i$ has weight 2, each column of $A_i$ has weight at least 1, each row of $D_i$ has weight at least 1 and each column of $D_i$ has weight at most 1,
%					\item  $\{A_i, 1\leq i \leq s-1\}$,$\{D_i,1 \leq i \leq s-1\}$,$\{B_i:= \left[\frac{A_i}{D_i}\right]\}$ are matrices such that each column of $B_i$ has Hamming weight 2, each column of $A_i$ has Hamming weight at least 1 and each row of $D_i$ has Hamming weight at least 1,
					\item $C$ is a matrix with each column having weight 2, $D$ is a matrix which exactly contains all the columns of $H_1$ which have weight at least 3,
					\item If $J$ is the first index such that $A_J, D_J$ are empty matrices (0 x $L$ , $L$ x 0, 0 x 0 matrix) ($J=0$ if $D_0$ is an empty matrix)  then we set $A_i,D_i$ to be empty matrices for all $s-1 \geq i \geq J$ and set $a_i=0, \rho_i=0, \forall s-1 \geq i \geq J$. If none of $D_0,A_i,D_i$,$1\leq i \leq s-1$ are empty matrices, then we simply set $J = s$.
					If $J < s$, let $\alpha_1,...,\alpha_l \in \{\sum_{j=0}^{J-1}a_j+1,...,n\}$ be the indices of 2-weight columns in $H_1$ apart from the 2-weight columns corresponding to columns of $B_1,...,B_{J-1}$. We take the submatrix of $H_1$ with rows indexed by $\sum_{j=0}^{J-1}{\rho_j}+1,...,m$ and columns indexed by $\alpha_1,...,\alpha_l$ and name the submatrix as matrix $C$. Let the number of columns in $C$ be $a_s$. If $C$ is an empty matrix then we set $a_{s}=0$. If $J < s $, the matrix $H_1$ can be written in the form \eqref{Hform_11}.
				\end{enumerate}
				\bea
				H_1 = \left[
				\scalemath{1}{
					\begin{array}{c|c|c|c|c|c|c|c|c}
						D_0 & A_1 & 0 & 0 & \hdots & 0 & 0 & 0 & \\
						\cline{1-8}
						0 & D_1 & A_2 & 0 & \hdots & 0 & 0 & 0 & \\
						\cline{1-8}
						0 & 0 & D_2 & A_3 & \hdots & 0 & 0 & 0 & \\
						\cline{1-8}
						0 & 0 & 0 & D_3 & \hdots & 0 & 0 & 0 &  \\
						\cline{1-8}
						\vdots & \vdots & \vdots & \vdots & \hdots & \vdots & \vdots & \vdots & D\\
						\cline{1-8}
						0 & 0 & 0 & 0 & \hdots & A_{J-2} & 0 & 0 & \\
						\cline{1-8}
						0 & 0 & 0 & 0 & \hdots & D_{J-2} & A_{J-1} & 0 & \\
						\cline{1-8}
						0 & 0 & 0 & 0 & \hdots & 0 & D_{J-1} & 0 & \\
						\cline{1-8}
						0 & 0 & 0 & 0 & \hdots & 0 & 0 & C & \\
					\end{array} 
				}
				\right], \ \ \label{Hform_11}
				\eea
				The entire rate bound derivation is correct and all the bounds will hold with $a_i=0, \rho_i=0, J \leq i \leq s-1$. If $l$  $(<J)$ is the first index such that $D_l$ is an empty matrix with $A_l$ a non empty matrix then the proof of the following claim \ref{claim2} (since in the proof of claim \ref{claim2}, we prove the claim for $A_l$ first and then proceed to $D_l$ and since $D_0,A_j,D_j$ must be non empty $\forall j < l$) will imply that $A_l$ has column weight 1 which will imply $D_l$ cannot be empty. Hence the case $A_l$, a non empty matrix and $D_l$, an empty matrix cannot occur.
				Although we have to prove the following claim \ref{claim2} for $A_i,D_i$, $1 \leq i \leq J-1$, $D_0$, we assume all $D_0,$ $A_i,D_i$, $1 \leq i \leq s-1$ to be non-empty matrices and prove the claim. Since the proof is by induction, the induction can be made to stop at $J-1$ (induction starts at 0) and the proof is unaffected by it.
%				\\\\
%				The following claim is vacuously true for empty matrices and the entire derivation is true with $a_i=0, \rho_i=0, s-1 \geq i \geq J$.
%				Although we have to prove the following claim for $A_i,D_i$, $1 \leq i \leq J-1$, $D_0$, we assume that all $D_0,$$A_i,D_i$,$1 \leq i \leq s-1$ to be non-empty matrices and prove the claim. Since the proof is by induction, the induction can be made to stop at $J-1$ and the proof is unaffected by it.
%				\begin{claim}\label{claim1}
%					For $1 \leq i \leq \frac{t}{2}-1$, $A_i$ is a matrix with each column having weight 1 and
%					for $0 \leq j \leq \frac{t}{2}-1$, $D_j$ is a matrix with each row and each column having weight 1. 
%				\end{claim}
				\begin{claim}\label{claim2}
					
					For $1\leq i \leq s-1$, $A_i$ is a matrix with each column having weight 1 and
					for $0 \leq i \leq s-2$, $D_i$ is a matrix with each row and each column having weight 1. $D_{s-1}$ is a matrix with each column having weight 1.

				\end{claim}
				\begin{proof}
					Proof is exactly similar to the proof of claim \ref{claim1}. So we skip the proof.
				\end{proof}
				
				By Claim \ref{claim2}, after permutation of columns of $H_1$ (in \eqref{Hform2} or \eqref{Hform_11} depending on $J$)  within the columns labeled by the set $\{\sum_{l=0}^{j-1}a_l+1,...\sum_{l=0}^{j-1}a_l+a_{j}\}$ for $0 \leq j \leq min(J-1,s-2)$, the matrix $D_j,0 \leq j \leq min(J-1,s-2)$ can be assumed to be a diagonal matrix with non-zero entries along the diagonal and hence $\rho_{i}=a_{i}$, for $0 \leq i \leq s-2$. 
				
					Since the sum of the weights of the columns of $A_i, 1 \leq i \leq s-1$ must equal the sum of the row weights and since each row of $A_i$ for $i \leq J-1$ can have weight at most $r$ and not $r+1$ due to weight one rows in $D_{i-1}$, and since for $s-1 \geq i \geq J$, $A_i$ is an empty matrix and we have set $a_i=0$, we obtain:\\
					\bea
					\text{For } 1 \leq i \leq s-1 : & & \notag \\
				\rho_{i-1} r & \geq & a_i, \notag \\
				a_{i-1} r & \geq & a_i. \label{Ineq1O}
				\eea
				For some $p \geq 0$,\\
				\bea
				\rho_{s-1}+\sum_{i=0}^{s-2} a_i + p =m.  \label{Ineq2O}
				\eea 
				By equating sum of row weights of $[\frac{D_{s-1}}{0} | C]$, with sum of column weights of $[\frac{D_{s-1}}{0} | C]$, we obtain:
				\bea
				2 a_{s}+a_{s-1} & \leq & (\rho_{s-1}+p) (r+1). \label{Ineq3O}	    
				\eea
				 If $J < s$ then the number of rows in $C$ is $p$ with each column of $C$ having weight 2 and $a_{s-1}=\rho_{s-1}=0$ (and $a_s=0$ if $C$ is also an empty matrix), hence the inequality \eqref{Ineq3O} is true. If $J=s$ and $C$ is an empty matrix then also the inequality \eqref{Ineq3O} is true as we would have set $a_{s}=0$.
				
				Substituting \eqref{Ineq2O} in \eqref{Ineq3O}:
				\bea
				2 a_{s} & \leq & (m-\sum_{i=0}^{s-2} a_i) (r+1)-a_{s-1}. \label{Ineq4O}
				\eea
				
				By equating sum of row weights of $D_{s-1}$, with sum of column weights of $D_{s-1}$, we obtain (Note that if $D_{s-1}$ is an empty matrix then also the following inequality is true as we would have set $a_{s-1}=0$):
				\bea
				a_{s-1} \leq \rho_{s-1}(r+1). \label{OddSpe}
				\eea
				
			By equating sum of row weights of $H_1$, with sum of column weights of $H_1$, we obtain
				\bea
				m(r+1) & \geq & a_0 + 2(\sum_{i=1}^{s} a_i) + 3 (n-\sum_{i=0}^{s} a_i), \label{Ineq_star2}
				\eea 
					If $J < s$ then $a_i=0$ $\forall J \leq i \leq s-1$. If $C$ is an empty matrix then $a_s=0$. Hence the inequality \eqref{Ineq_star2} is true irrespective of whether $J=s$ or $J<s$ (even if $C$ is an empty matrix).\\
				\bea
				m(r+1) & \geq & 3n-2a_0 -(\sum_{i=1}^{s} a_i). \label{Ineq5O} 
				\eea
				
				Our basic inequalities are \eqref{Ineq1O},\eqref{Ineq2O},\eqref{Ineq3O},\eqref{OddSpe},\eqref{Ineq_star2}. We manipulate these 5 inequalities to derive the bound on rate.
				
				Substituting \eqref{Ineq4O} in \eqref{Ineq5O}:
				\bea
				m(r+1) & \geq & 3n-2a_0 -(\sum_{i=1}^{s-1} a_i) - \left(\frac{(m-\sum_{i=0}^{s-2} a_i) (r+1)-a_{s-1}}{2} \right). \label{Ineq6O}
				\eea
				For $s=1$, \eqref{Ineq6O} becomes:
				\bea
				m(r+1) & \geq & 3n-2a_0 - \left(\frac{m (r+1)-a_{0}}{2} \right), \notag \\        m\frac{3(r+1)}{2} & \geq & 3n-\frac{3}{2}a_0. \label{Ineq7O} 
				\eea
				Substituting  \eqref{OddSpe} in \eqref{Ineq7O}:
				\bea
				m\frac{3(r+1)}{2} & \geq & 3n-\frac{3}{2}\rho_0 (r+1), \label{Ineq8O} 
				\eea
				Substituting  \eqref{Ineq2O} in \eqref{Ineq8O}:
				\bea
				m\frac{3(r+1)}{2} & \geq & 3n-\frac{3}{2}(m-p) (r+1), \notag \\
				\text{Since $p \geq 0$, } 3m(r+1) & \geq & 3n. \label{Ineq9O}
				\eea
				\eqref{Ineq9O} implies,
				\bea
				\frac{k}{n} \leq \frac{r}{r+1}. \label{Ineq10O}
				\eea
				\eqref{Ineq10O} proves the bound \eqref{Thm2} for $s=1$. Hence from now on we assume $s \geq 2$.\\
				For $s \geq 2$, \eqref{Ineq6O} implies:
				\bea
				m \frac{3(r+1)}{2} & \geq & 3n+ a_0 \left( \frac{r+1}{2}-2 \right) +(\sum_{i=1}^{s-2} a_i) \left( \frac{r+1}{2}-1 \right) - \frac{a_{s-1}}{2}. \label{Ineq11O} 
				\eea
				Substituting  \eqref{Ineq1O} in \eqref{Ineq11O} and since $r \geq 3$:    
				\bea
				m \frac{3(r+1)}{2} & \geq & 3n+ \frac{a_{s-1}}{r^{s-1}} \left( \frac{r+1}{2}-2 \right) +(\sum_{i=1}^{s-2} \frac{a_{s-1}}{r^{s-1-i}}) \left( \frac{r+1}{2}-1 \right) - \frac{a_{s-1}}{2},  \notag \\
				m \frac{3(r+1)}{2} & \geq & 3n+ a_{s-1} \left( \left(\sum_{i=1}^{s-1} \frac{1}{r^i} \right) \left(\frac{r+1}{2}-1\right) -\frac{1}{r^{s-1}}-\frac{1}{2} \right),  \notag \\
				m \frac{3(r+1)}{2} & \geq & 3n- a_{s-1} \left(\frac{3}{2 r^{s-1}}\right). \label{Ineq12O}
				\eea
				Rewriting \eqref{Ineq2O}:
				\bea
				\rho_{s-1}+\sum_{i=0}^{s-2} a_i + p =m.  \label{Ineq13O}
				\eea
				Substituting \eqref{OddSpe},\eqref{Ineq1O} in \eqref{Ineq13O}:
				\bea
				\rho_{s-1}+\sum_{i=0}^{s-2} a_i + p &=& m, \notag \\
				\frac{a_{s-1}}{r+1}+\sum_{i=0}^{s-2} \frac{a_{s-1}}{r^{s-1-i}} & \leq & m-p, \notag \\
				a_{s-1} \leq \frac{m-p}{\frac{1}{r+1}+\sum_{i=1}^{s-1} \frac{1}{r^i}}, \notag \\
				a_{s-1} \leq \frac{(m-p)(r+1)}{1+\frac{(r^{s-1}-1)(r+1)}{(r^{s-1})(r-1)} }. \label{Ineq14O}
				\eea
				Substituting \eqref{Ineq14O} in \eqref{Ineq12O}:
				\bea
				m \frac{3(r+1)}{2} & \geq & 3n- \frac{(m-p)(r+1)}{1+\frac{(r^{s-1}-1)(r+1)}{(r^{s-1})(r-1)} } \left(\frac{3}{2 r^{s-1}}\right), \notag \\
				\text{Since $p \geq 0$, } m \frac{3(r+1)}{2} \left( 1+\frac{1}{r^{s-1}+\frac{(r^{s-1}-1)(r+1)}{(r-1)}} \right) & \geq & 3n. \label{Ineq15O} 
				\eea
				\eqref{Ineq15O} after some algebraic manipulations gives the required bound on $1-\frac{m}{n}$ and hence on $\frac{k}{n}$ as stated in the theorem. 
			\end{proof}
			Again an alternative proof for Theorem \ref{thm_todd} is given below by using linear programming:
			\begin{proof}
				The inequalities \eqref{Ineq1O},\eqref{Ineq3O},\eqref{OddSpe} and \eqref{Ineq5O} are linear inequalities and are written in matrix form as:\\
				\begin{equation*}
				A\mathbf{\underline{x}} \geq \mathbf{\underline{b}}
				\end{equation*}
				where
				\bea
				A & = & \left[
				\begin{array}{c c c c c c c c c}
					r & -1 & 0 & \hdots & 0 & 0 & 0 & 0 & 0\\
					0 & r & -1 & \hdots & 0 & 0 & 0 & 0 & 0\\
					\vdots & \vdots & \vdots & \ddots & \vdots & \vdots & \vdots & \vdots & \vdots\\
					0 & 0 & 0 & \hdots & r & -1 & 0 & 0 & 0\\
					0 & 0 & 0 & \hdots & 0 & -1 & (r+1) & -2 & (r+1)\\
					0 & 0 & 0 & \hdots & 0 & -1 & (r+1) & 0 & 0\\
					(r+3) & (r+2) & (r+2) & \hdots & (r+2) & 1 & (r+1) & 0 & (r+1)
				\end{array} \right] \label{Amat}
				\eea
				which is a $(s + 2) \times (s + 3) $ matrix and
				\bea 
				\mathbf{\underline{x}} = \left[ \begin{array}{c c c c c c c}
					a_0 & a_1 & \hdots & a_{s-1} & \rho_{s-1} & a_s & p
				\end{array} \right]^T, \mathbf{\underline{b}} = \left[ \begin{array}{c c c c c}
				0 &	0 &	\hdots & 0 & 3n
			\end{array}\right]^T
			\eea
			where $\mathbf{\underline{x}}$ is an $(s+3) \times 1$ matrix and $\mathbf{\underline{b}}$ is an $(s+2) \times 1$ matrix
			The problem of finding an upper bound on rate of the code now becomes one of minimizing $m = \mathbf{\underline{c}}^T\mathbf{\underline{x}}$, which is a linear objective function where $\mathbf{\underline{c}} = \left[ \begin{array}{c c c c c c c c}
			1 & 1 & \hdots & 1 & 0 & 1 & 0 & 1
			\end{array}\right]^T$ is an $(s+3) \times 1$ matrix, . Also by definition of $\mathbf{\underline{x}}$, $\mathbf{\underline{x}} \geq 0$. This is now in a standard form of a linear program formulation as:
			\begin{align*}
			\text{minimize } & \mathbf{\underline{c}}^T\mathbf{\underline{x}}\\
			\text{s.t. } A\mathbf{\underline{x}} & \geq \mathbf{\underline{b}}\\
			\mathbf{\underline{x}} & \geq 0
			\end{align*}
			The dual problem of the above is
			\begin{align*}
			\text{maximize } & \mathbf{\underline{b}}^T\mathbf{\underline{\lambda}}\\
			\text{s.t. } A^T\mathbf{\underline{\lambda}} & \leq \mathbf{\underline{c}}\\
			\mathbf{\underline{\lambda}} & \geq 0
			\end{align*}
			We will solve an the dual problem by writing it in standard $\text{minimize}$ $-\mathbf{\underline{b}}^T\underline{\lambda}$ form and using the simplex method. Let us introduce slack variables $h_1$,...,$h_{s+3}$ and re-write the constraints as
			\begin{align*}
			B\mathbf{\underline{v}} = \mathbf{\underline{c}}, \text{ }\mathbf{\underline{v}} \geq 0,
			\end{align*} 
			where
			\bea
			B & = & \left[
			\begin{array}{c c c c c c c c c c c c c c c c c}
				r & 0 & 0 & \hdots & 0 & 0 & 0 & (r+3) & 0 & 1 & 0 & 0 & \hdots & 0 & 0 & 0 & 0 \\
				-1 & r & 0 & \hdots & 0 & 0 & 0 & (r+2) & 0 & 0 & 1 & 0 &\hdots & 0 & 0 & 0 & 0\\
				0 & -1 & r & \hdots & 0 & 0 & 0 & (r+2) & 0 & 0 & 0 & 1 &\hdots & 0 & 0 & 0 & 0\\
				\vdots & \vdots & \ddots & \ddots & \vdots & \vdots & \vdots & \vdots & \vdots & \vdots & \vdots & \vdots & \ddots & \vdots & \vdots & \vdots & \vdots\\
				0 & 0 & 0 & \hdots & r & 0 & 0 & (r+2) & 0 & 0 & 0 & 0 & \hdots & 1 & 0  & 0 & 0\\
				0 & 0 & 0 & \hdots & -1 & -1 & -1 & 1 & 0 & 0 & 0 & 0 & \hdots & 0 & 1  & 0 & 0\\
				0 & 0 & 0 & \hdots & 0 & (r+1) & (r+1) & (r+1) & 0 & 0 & 0 & 0 & \hdots & 0 & 0 & 1 & 0\\
				0 & 0 & 0 & \hdots & 0 & -2 & 0 & 0 & 0 & 0 & 0 & 0 & \hdots & 0 & 0 & 0 & 1\\ 
				0 & 0 & 0 & \hdots & 0 & (r+1) & 0 & (r+1) & 1 & 0 & 0 & 0 & \hdots & 0 & 0 & 0 & 0\\
			\end{array} \right] \label{Amatodd},
			\eea
			and 
			\bea 
			\mathbf{\underline{v}} = \left[ \begin{array}{c c c c c c c}
				\lambda_1 & \lambda_2 & \hdots & \lambda_{s+2} & h_1 & \hdots & h_{s+3}
			\end{array} \right]^T.
			\eea
			
			With this, the objective function now is $\mathbf{\underline{d}}^T\mathbf{\underline{v}}$, where $\mathbf{\underline{d}} = \left[\begin{array}{c c c c c}
			-\mathbf{\underline{b}}^T & 0 & 0 & \hdots & 0
			\end{array} \right]^T$ which is an $(2s+5)\times 1$ matrix.Define $\mathbf{\underline{v}}_{BV}=[\lambda_1,...,\lambda_{s+2},h_1]^T$.
			
			We pick the variables $\beta_1=\lambda_1$,...,$\beta_{s+2}=\lambda_{s+2}$, $\beta_{s+3}=h_1$ as ``basic variables" and the rest, called ``non-basic variables" $\alpha_1=h_2,...,\alpha_{s+2}=h_{s+3}$ will be set to $0$. The set of basic variables is chosen such that the columns of $B$ corresponding to these basic variables is a full-rank square matrix $B_{BV}$. The remaining columns of $B$ will give a matrix $B_{NBV}$. The system of equations is now in the following form:
			\begin{align*}
			\left[\begin{array}{c c}
			B_{BV} & B_{NBV}
			\end{array}\right]
			\left[\begin{array}{c}
			\mathbf{\underline{v}}_{BV} \\
			\hline
			\mathbf{\underline{0}}
			\end{array}
			\right] = \mathbf{\underline{c}}
			\end{align*}
			Therefore we will equivalently solve
			\begin{align*}
			B_{BV}\mathbf{\underline{v}}_{BV} = \mathbf{\underline{c}}
			\end{align*}
			The above system of equations can be solved in closed form to get the following:
			\begin{align}
			\lambda_{s+2} & = \frac{2\sum_{i = 1}^{s-1}r^i+1}{3(r^{s}+2\sum_{i = 1}^{s-1}r^i+1)},\label{Odd1}\\
			h_1 & = \frac{\sum_{i = 1}^{s-1}r^i+2}{3(\sum_{i = 0}^{s-1}r^i)},\label{Odd2}\\
			\lambda_{j+1} & = \frac{r^s-3r^{s-n-1}+r+1}{3(r+1)(r^s-1)},\text{   for }0 \leq j \leq s-2 \label{Odd3}\\
			\lambda_s & = 0, \label{Odd4}\\
			\lambda_{s+1} & = \frac{\sum_{i = 1}^{s-1}r^i+2}{3(r+1)(\sum_{i = 0}^{s-1}r^i)}.\label{Odd5}
			\end{align}
			which are non-negative if $r \geq 3$.
			
			 Hence the solution given by \eqref{Odd1},\eqref{Odd2},\eqref{Odd3},\eqref{Odd4} and \eqref{Odd5} is a basic feasible solution. Let the elements of the vector $\mathbf{\underline{d}}$ be indexed by the elements of the vector $\mathbf{\underline{v}}$ i.e., $i^{th}$ component of $\mathbf{\underline{d}}$ is indexed by $i^{th}$ component of $\mathbf{\underline{v}}$. To check for optimality we check if the ``reduced cost coefficients" $r_{\alpha_i} = d_{\alpha_i} - z_{\alpha_i}$ are non-negative, for every non-basic variable $\alpha_i,1\leq i \leq s+2$. We note that for the above made choice of basic and non-basic variables, in the vector $\mathbf{\underline{d}}$ only $d_{\beta_{s+2}} = -3n$ is non-zero. The quantity $z_{\alpha_i}$ is defined as follows:
			 
			\begin{equation*}
			z_{\alpha_i} = \sum_{j = 1}^{s+3}d_{\beta_j}y_{(j,\alpha_i)} = d_{\beta_{s+2}}y_{(s+2,\alpha_i)} = -3ny_{(s+2,\alpha_i)}
			\end{equation*}
			where $y_{(s+2,\alpha_i)}$ are as shown in the row reduced echelon form of matrix B below:
			\begin{align*}
			B_{rref} = \left[\begin{array}{c c c c c c c c c c c c c}
			1 & 0 & 0 & \hdots & 0 & 0 & 0 & y_{(1,\alpha_1)} & y_{(1,\alpha_2)} & y_{(1,\alpha_3)} & \hdots & y_{(1,\alpha_{s+1})} & y_{(1,\alpha_{s+2})} \\
			0 & 1 & 0 & \hdots & 0 & 0 & 0 & y_{(2,\alpha_1)} & y_{(2,\alpha_2)} & y_{(2,\alpha_3)} & \hdots & y_{(2,\alpha_{s+1})} & y_{(2,\alpha_{s+2})}\\
			0 & 0 & 1 & \hdots & 0 & 0 & 0 & y_{(3,\alpha_1)} & y_{(3,\alpha_2)} & y_{(3,\alpha_3)} & \hdots & y_{(3,\alpha_{s+1})} & y_{(3,\alpha_{s+2})}\\
			\vdots & \vdots & \vdots & \ddots & \vdots & \vdots & \vdots & \vdots & \vdots & \vdots & \ddots & \vdots & \vdots\\
			0 & 0 & 0 & \hdots & 1 & 0 & 0 & y_{(s+1,\alpha_1)} & y_{(s+1,\alpha_2)} & y_{(s+1,\alpha_3)} & \hdots & y_{(s+1,\alpha_{s+1})} & y_{(s+1,\alpha_{s+2})}\\
			0 & 0 & 0 & \hdots & 0 & 1 & 0 & y_{(s+2,\alpha_1)} & y_{(s+2,\alpha_2)} & y_{(s+2,\alpha_3)} & \hdots & y_{(s+2,\alpha_{s+1})} & y_{(s+2,\alpha_{s+2})}\\
			0 & 0 & 0 & \hdots & 0 & 0 & 1 & y_{(s+3,\alpha_1)} & y_{(s+3,\alpha_2)} & y_{(s+3,\alpha_3)} & \hdots & y_{(s+3,\alpha_{s+1})} & y_{(s+3,\alpha_{s+2})}\\ 
			\end{array}
			\right]
			\end{align*}
			
			We observe that in reducing $B$ to $B_{rref}$, to row-$(s+2)$ only non-negative linear combinations of the rows above it are added, entries of which are either $0$ or $1$. Therefore $r_{\alpha_i} \geq 0$ for $\alpha_i$ all non-basic variables. Hence $y_{(s+2,\alpha_1)},...,y_{(s+2,\alpha_M)} \geq 0$. Hence the basic solution is an ``optimal basic feasible" solution.\\
			By the theorem of strong duality the optimal solutions of the primal problem and the dual problem are equal. Therefore the minimum value of $m$ is $3n\lambda_{s+2} = \frac{n(2\sum_{i = 1}^{s-1}r^i+1)}{(r^{s}+2\sum_{i = 1}^{s-1}r^i+1)}$.\\
			Hence we get the upper bound on the rate:
			\begin{equation*}
			\frac{k}{n} \leq 1 - \frac{m}{n} \leq \frac{r^s}{(r^{s}+2\sum_{i = 1}^{s-1}r^i+1)}
			\end{equation*}
			We now pick a solution for the primal problem and show that it is feasible and gives the optimal objective function value.
			\begin{align*}
			a_i & = \frac{nr^i(r+1)}{r^{s}+2\sum_{i = 1}^{s-1}r^i+1},\text{ for }0 \leq i \leq s-1\\
			\rho_{s-1} & = \frac{nr^{s-1}}{r^{s}+2\sum_{i = 1}^{s-1}r^i+1},\text{   }a_{s} = 0,\text{   }
			p = 0.
			\end{align*}
			It is easy to check that this solution satisfies the constraints of the primal problem with equality. 
%			It remains to check the following:
%			\begin{equation*}
%			(r+3)a_0+(r+2)\sum_{i = 1}^{\frac{t}{2}-1}a_i+a_{\frac{t}{2}} \geq 3n.
%			\end{equation*}
%			Upon simplification, it is seen that the above is met with equality. 
Therefore the chosen solution is a feasible solution. It is also easy to check that the solution gives the optimal value of the objective function. Hence it is an optimal feasible solution. We thus conclude that a code having the above chosen values will have the optimal rate.
		\end{proof}
			
			From the two derivations, it becomes clear that a code achieving the rate bound given in \eqref{Thm1} will have parity check matrix of the form \eqref{Hform} with $D$ being an empty matrix and all the inequalities  \eqref{Ineq1},\eqref{Ineq2},\eqref{Ineq3},\eqref{Ineq5} met with equality. A similar observation holds for $t$ odd also.
			It may be noted here that our bound, for the special cases of $t=2,3,4$, matches with the rate bound given in \cite{PraLalKum},\cite{song2016sequential},\cite{balaji2016binary} respectively. In the rest of the paper, the codes achieving the bounds \eqref{Thm1} or \eqref{Thm2} depending on $t$ will be referred to as ``rate-optimal codes".
%			
%			
%			Note that our bound \eqref{Thm2}, for the special case of $t=3$, matches with the rate bound given in \cite{song2016sequential}.
\begin{note}
	We now make a remark on the blocklength of the rate-optimal codes. From the proofs it can be seen that, for $t$ even, for the optimal values of $a_0,...,a_{\frac{t}{2}}$ to be integral, $2n$ needs to be an integer multiple of $r^{\frac{t}{2}} + 2 \sum_{i=0}^{\frac{t}{2}-1} r^i$. Similarly for $t$ odd, for the optimal values of $a_0,...,a_{s-1},\rho_{s-1}$ to be integral, $(r+1)n$ needs to be an integer multiple of $r^{s}+2\sum_{i = 1}^{s-1}r^i+1$ and $nr^{s-1}$ needs to be an integer multiple of $r^{s}+2\sum_{i = 1}^{s-1}r^i+1$.
\end{note}
\begin{note}
 The linear program was numerically solved for integer values of the variables and objective function. The solution for the case of $t = 4$ and $r = 5$ is shown in figure \ref{fig:IntLinProg}. Notice that integer solutions achieving our rate bound are possible for a set of blocklengths in a periodic fashion as predicted by theoretical solution.
	\begin{figure}[ht]
		\centering
		\includegraphics[scale = 0.3]{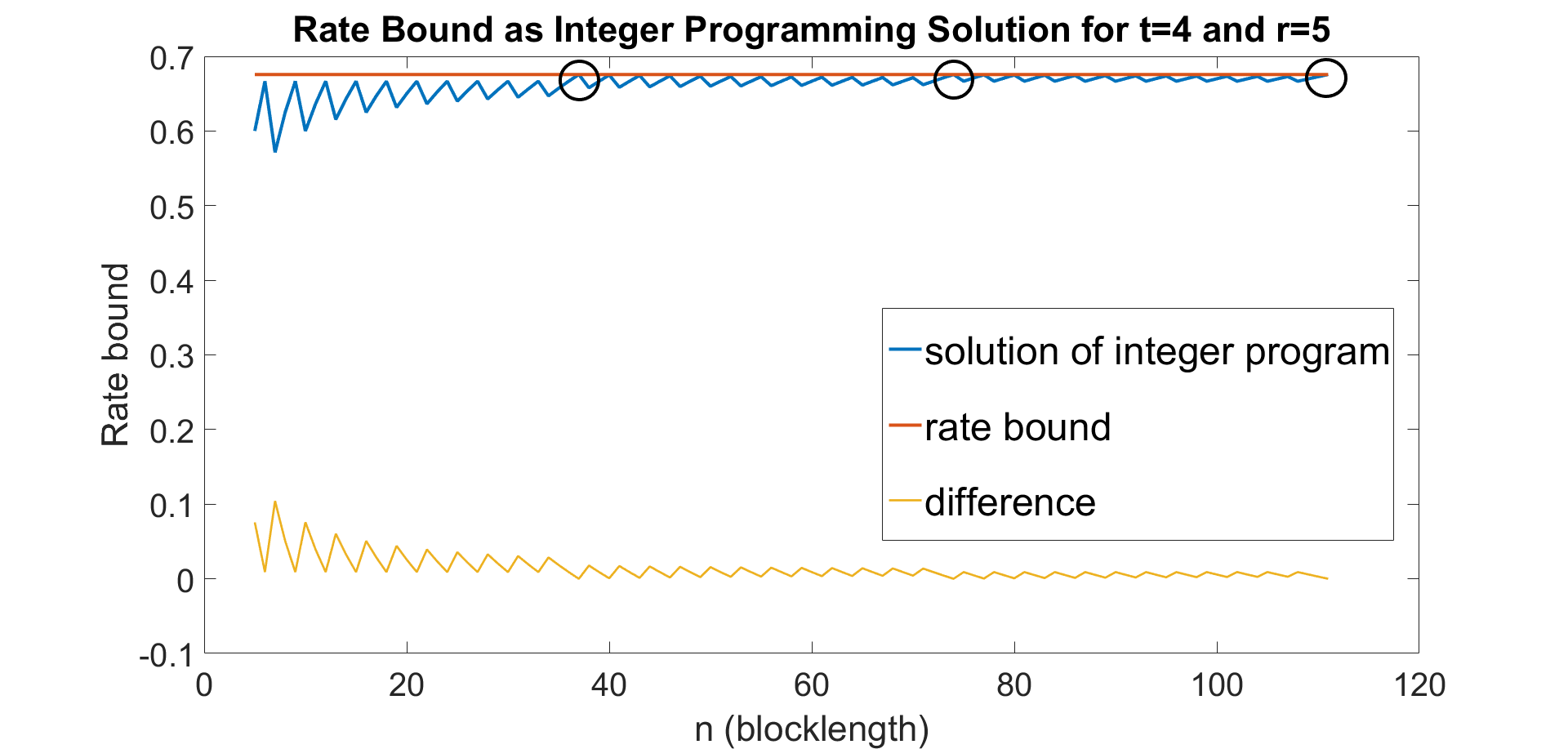}
		\caption{Plot showing integer solution for the linear program formulation. Circled points are the values of blocklength that allow for integer solution achieving our rate bound.}
		\label{fig:IntLinProg}
	\end{figure}
\end{note}
\begin{figure}[ht]
	\centering
	\includegraphics[width=4in]{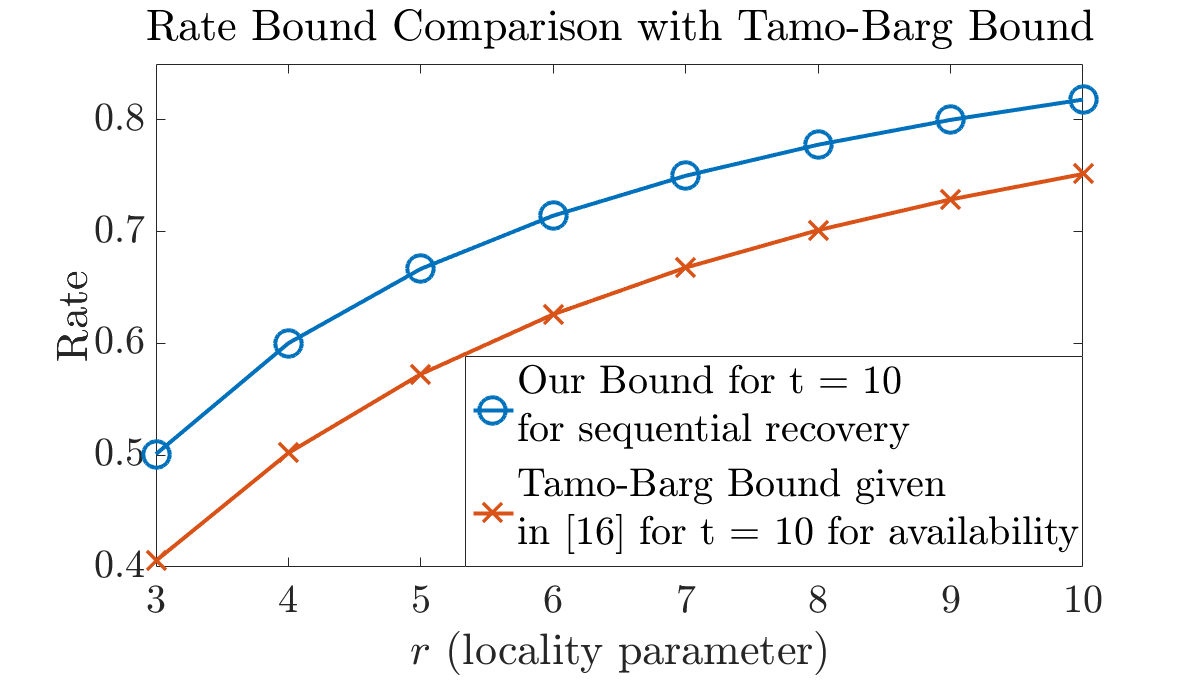}
	\caption{Comparison of rate bounds on codes with availability and codes with sequential recovery for an example case $t=10$.}
	\label{fig:Comparison_Tamo}
\end{figure}
\begin{note}
	In general, one would expect an $(n,k,r,t)_{seq}$ code to achieve a higher rate than a counterpart code having $t$-availability. The achievable upper bound on rate of a code with sequential recovery is significantly larger than the upper bound on the rate of codes having availability given in \cite{TamBarFro} as shown in Fig.~\ref{fig:Comparison_Tamo}.
\end{note}

\section{Construction of codes achieving the upper bound on rate}
In this section we give a construction of codes with sequential recovery achieving the rate bound \eqref{Thm1} for any $r \geq 3,t=2s,s \geq 1$ and also give a construction achieving \eqref{Thm2} for any $r \geq 3,t=2s-1,s \geq 1$.
\subsection{Construction of Binary Rate-Optimal Codes for $t$ Even}
%In this subsection we give a construction of codes achieving the rate bound \eqref{Thm1} for any $r \geq 3$ and $t \in 2\mathbb{Z_{+}}$. A construction achieving the bound \eqref{Thm1} for the special case of $t=2$ was provided in \cite{PraLalKum}, \cite{SonYue_3_Erasure}, \cite{BalPraKum} and  for the case of $t = 4$ was provided in \cite{balaji2016binary}.\\
In this subsection we give a construction of rate-optimal codes for any $r \geq 3$ and $t$-even. While the rate bound is independent of the size of the field over which the code is defined, our construction provides optimal binary codes. A construction achieving the bound \eqref{Thm1} for the special case of $t=2$ was provided in \cite{PraLalKum}, \cite{SonYue_3_Erasure}, \cite{BalPraKum} and  for the case of $t = 4$, in \cite{balaji2016binary}.

We provide an iterative, graph-based construction of rate-optimal codes for a given $r \geq 3$ and $t$ even. We build a graph $G_{\frac{t}{2}-1}$ iteratively, starting from $G_0$, by adding nodes in vertical, layer-by-layer fashion.  At every stage of the iteration, the graph $G_i$ thus constructed, will always have girth $\geq t+1$. We then define our rate-optimal code based on the graph $G_{\frac{t}{2}-1}$. \\

\textbf{Construction of graph $G_{\frac{t}{2}-1}$:}
Denote the vertex set of a graph $G$ by $V(G)$. Let $N(v)$ denote the neighbors of a node $v$ in a graph. We make use of graphs $G_0$ and $B_i$, for $1 \leq i \leq \frac{t}{2}-1$ described below as the ingredients:
\ben
%\item We construct $G_0$ as follows: Let $G$ be an $r$-regular graph with girth $\geq t+1$\textsuperscript{\ref{Fur}}. The construction requires the base-graph $G_0$ to be $r$-regular with girth $\geq t+1$ and number of nodes that is a multiple of $r^{\frac{t}{2}-1}$. Hence we take the disjoint union of a number of 
%%$\frac{\text{LCM}(|V(G)|,r^{\frac{t}{2}-1})}{|V(G)|}$ 
%copies of $G$, with the number chosen to ensure that the size of the resultant graph is a multiple of $r^{\frac{t}{2}-1}$.  We then declare the resultant graph to be $G_0$. Define $U_0=V(G_0),|U_0|=u_0$.  
\item We pick $G_0$ as follows: Let $G_0$ be an $r$-regular graph with girth $\geq t+1$\textsuperscript{\ref{Fur}}. Define $U_0=V(G_0),|U_0|=u_0$.  
\item Next, for $i$ in the range $1 \leq i \leq \frac{t}{2}-1$, we iteratively pick an $(r,u_{i-1})$-biregular bipartite graph $B_i$ with girth $g_i \geq \left \lceil \frac{t+1}{i+\frac{1}{2}} \right \rceil$ \footnote[2]{Such a graph can be constructed, due to \cite{Furedi}\label{Fur}}. Let $U_i,L_i$ be the two (upper and lower) sets of nodes in the bipartite graph with $\text{degree}(x)=r,\forall\ x\in U_i$ and $\text{degree}(y)=u_{i-1},\forall\ y\in L_i$. Hence $V(B_i)=U_i \cup L_i$. and Let $|U_i|=u_i$ and $|L_i|=l_i$. By edge counting, we have $ru_i = u_{i-1}l_i$.  
\een
\begin{figure}
	\centering 
	\tikzstyle{decision} = [diamond, draw, fill=gray!20, 
	text badly centered, node distance=2cm, inner sep=0pt, aspect=2.5]
	
	\tikzstyle{block} = [rectangle, draw, fill=gray!20, 
	text width=22em, text centered, rounded corners, node distance=2.4cm]
	
	\tikzstyle{block2} = [rectangle, draw, fill=gray!20, 
	text centered, rounded corners, node distance=1.4cm]
	
	\tikzstyle{block3} = [rectangle, draw, fill=gray!20,  text centered, rounded corners, node distance=1.4cm]    
	
	\tikzstyle{line} = [draw, -latex']
	
	\tikzstyle{cloud} = [draw, ellipse,fill=red!20, node distance=2cm]
	
	\begin{tikzpicture}[node distance = 1.8cm, auto]
	% Place nodes
	\node [block] (INIT) {\small{Pick $G_0$, $r$-regular, girth $\geq(t+1)$,\\ $V(G_0)=U_0$,\ $|U_0|=u_0$}};
	
	\node [block3, below of=INIT, yshift=0.2cm] (INITIALI) {\small{$i=1$}} ;
	
	\node [block, below of=INITIALI, yshift=1cm] (BIPARTITE) {\small{Pick bipartite graph $B_i$: $(r,u_{i-1})$-biregular, \\ $V(B_i)=U_i \cup L_i$, girth $\geq  \left \lceil (t+1)/(i+1/2) \right \rceil$, \\ $|U_i|=u_i,\  |L_i|=l_i$}};
	
	\node [block, below of=BIPARTITE, yshift=-0.4cm] (SPLIT) {\small{Let $L_i=\{v_1^i,v_2^i,...,v_{l_i}^i\}$ \\ For $1 \leq j \leq l_i$, \\ Split the node $v_j^i \in L_i$ in $B_i$ with $\text{deg}(v_j^i)=u_{i-1}$, into $u_{i-1}$ degree-$1$ nodes $(V_j^i)$} such that $N(v_j^i)=N(V_j^i)$\\ ($V_j^i$ is the set of $u_{i-1}$ degree 1 nodes which are formed by splitting $v_j^i$)\\An example of splitting a node into degree-1 nodes is shown in fig \ref{fig:Splitting}};
	
	\node [block, below of=SPLIT, yshift=-0.7cm] (REPLICATE) {\small{Replicate the graph $G_{i-1}$ $l_i$ times (thus each upper node in $U_{i-1}$ is also replicated $l_i$ times), \\ $(G_{i-1})_j$: $j^{\text{th}}$ copy of $G_{i-1}$ and $(U_{i-1})_j$: copy of $U_{i-1}$ in $(G_{i-1})_j$ \\ $(U_{i-1} \subseteq V(G_{i-1}))$}};
	
	\node [block2,  left of=REPLICATE, node distance=4cm, xshift=-1cm] (UPDATE) {\rotatebox[]{90}{$i=i+1$}};
	
	\node [block, below of=REPLICATE, yshift=-0.3cm] (MERGE) {\small{Let $(U_{i-1})_j=\{(c_{1}^{i-1})_j,...,(c_{u_{i-1}}^{i-1})_j\}$ and $V_j^i=\{(d_{1}^i)_j,...,(d_{u_{i-1}}^i)_j \}$ \\ Merge the node $(c_{m}^{i-1})_j \in (U_{i-1})_j$ with $(d_{m}^i)_j \in V_j^i$, \\ $\forall\ 1 \leq m \leq u_{i-1}$ and $\forall\ 1 \leq j \leq l_i$ \\ An example of merging two nodes is shown in fig \ref{fig:Merging}}};
	
	\node[block, below of=MERGE, yshift=-0.1cm] (GIGRAPH) {\small{The resulting graph is $G_i$; can be verified that $G_i$ has girth $\geq t+1$, the nodes $U_i \subset V(G_i)$ now form the upper layer of the graph $G_i$ and these are the nodes in $G_i$ that participate in the next iterative step}};
	
	\node [decision, below of=GIGRAPH] (DECIDE) {is $i = \frac{t}{2}-1?$};
	
	\node [block3, below of=DECIDE] (stop) {stop};
	
	% Draw edges
	\path [line] (INIT) -- (INITIALI);
	\path [line] (INITIALI) -- (BIPARTITE);
	\path [line] (BIPARTITE) -- (SPLIT);
	\path [line] (SPLIT) -- (REPLICATE);
	\path [line] (REPLICATE) -- (MERGE);
	\path [line] (MERGE) -- (GIGRAPH);
	\path [line] (GIGRAPH) -- (DECIDE);
	\path [line] (DECIDE) -| node [near start] {NO} (UPDATE);
	\path [line] (UPDATE) |- (BIPARTITE);
	\path [line] (DECIDE) -- node {YES}(stop);
	%   \path [line,dashed] (expert) -- (init);
	%   \path [line,dashed] (system) -- (init);
	%   \path [line,dashed] (system) |- (replicatebi);
	
	\end{tikzpicture}
	
	\caption{Flowchart showing the iterative construction of $G_{\frac{t}{2}-1}$ for $t\geq 4$. For $t=2$, $G_{\frac{t}{2}-1}=G_0$. Hence the graph $G_0$ mentioned in the first step of the flow chart will be our required graph construction for $t=2$.} 
	\label{fig:Flowchart}
\end{figure}
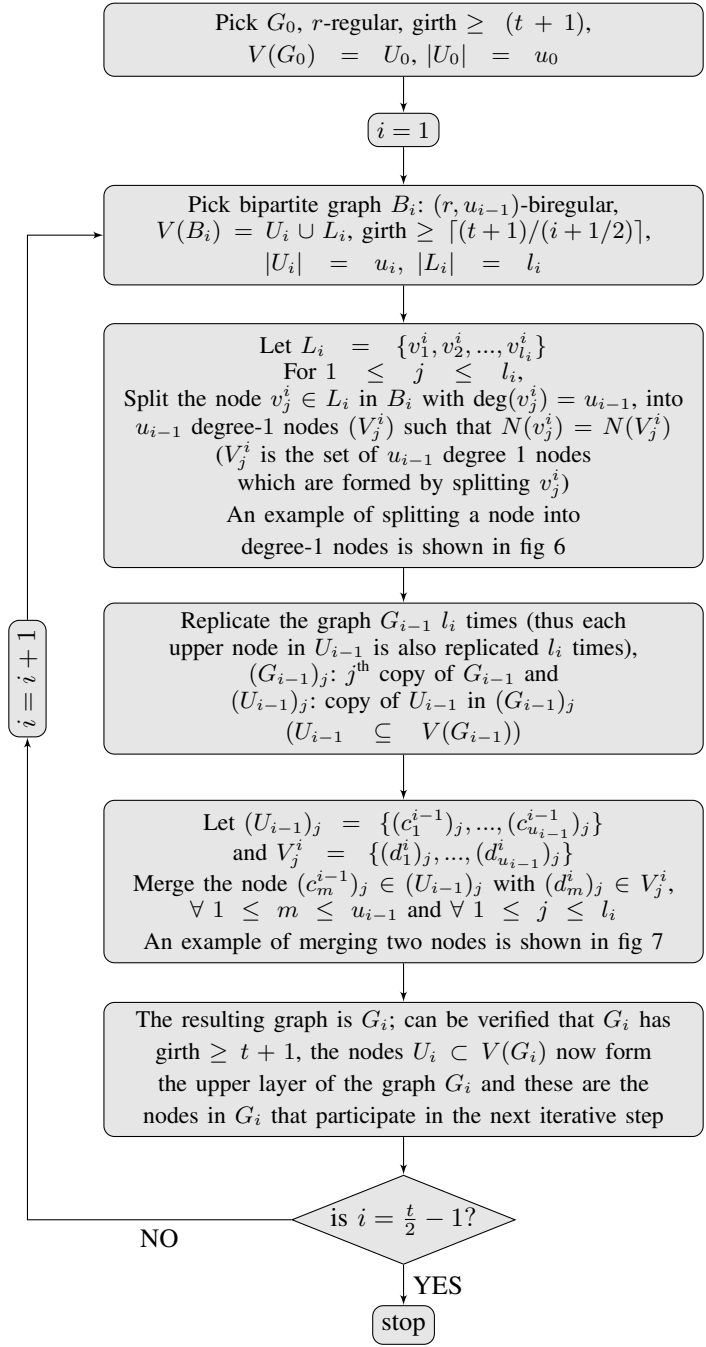
\begin{figure}[ht]
	\centering
	\includegraphics[scale = 0.8]{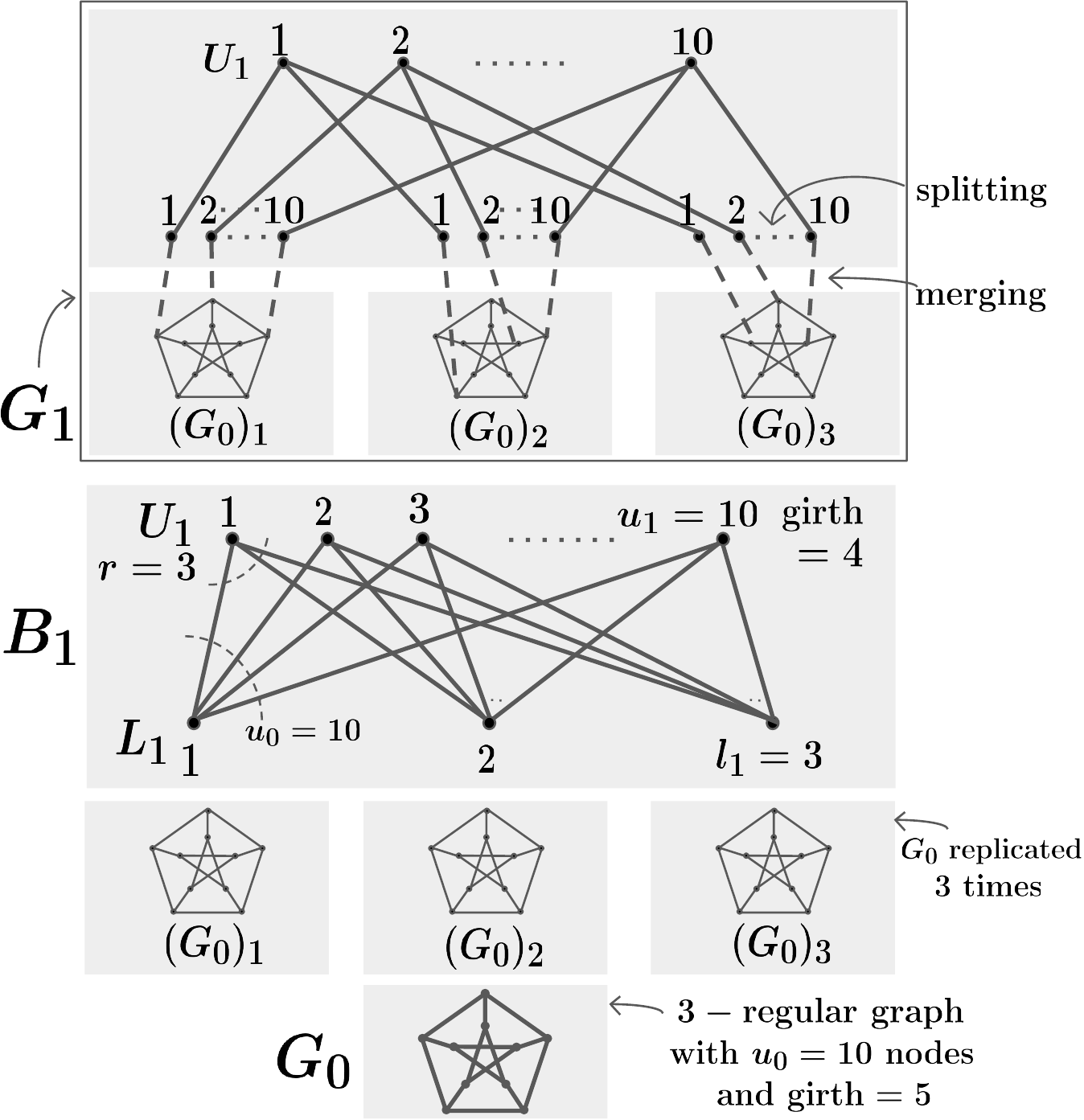}
	\caption{Depiction of steps mentioned in the flowchart Fig \ref{fig:Flowchart} for an example construction of $G_1$ for $t = 4$, $r = 3$. Here $G_0$ is the Petersen graph. For the purpose of representation the merger of nodes is shown only for 9 nodes by the dashed lines.}
	\label{fig:ExampleConstruction}
\end{figure}
\begin{figure}[ht]
	\centering
	\includegraphics[scale = 0.8]{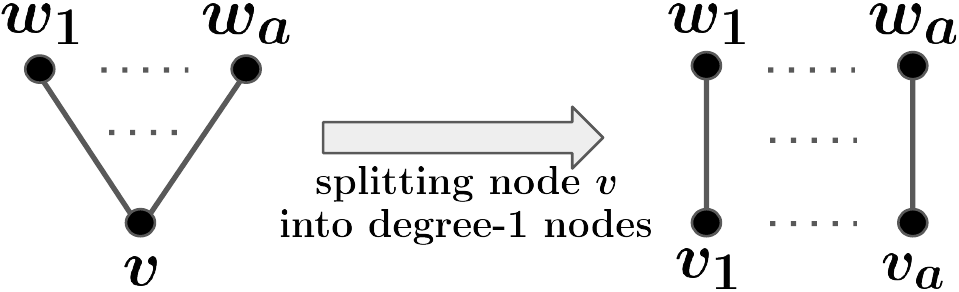}
	\caption{Example of splitting a node into degree-1 nodes}
	\label{fig:Splitting}
\end{figure}
\begin{figure}[ht]
	\centering
	\includegraphics[scale = 0.6]{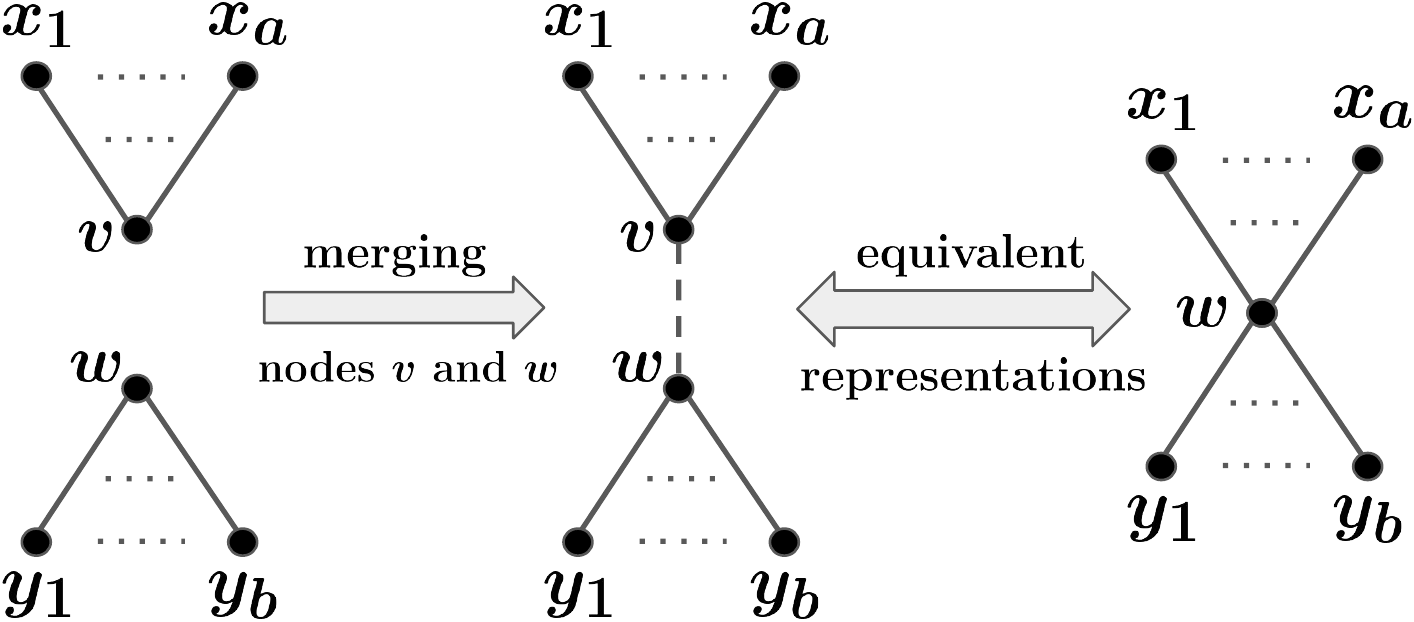}
	\caption{Example of merging two nodes}
	\label{fig:Merging}
\end{figure}
The flowchart shown in Fig.~\ref{fig:Flowchart} describes the iterative construction of the graph $G_{\frac{t}{2}-1}$. An example construction of $G_1$ for $t=4,r=3$ depicting the steps of the flowchart for the case of a single iteration appears in Fig.~\ref{fig:ExampleConstruction}.
We make the following observations:
\begin{enumerate}
	\item For $0 \leq i \leq \frac{t}{2}-1$, there are $f_i=\prod_{j=i+1}^{\frac{t}{2}-1}l_j$ disjoint copies of $G_i$ that reside inside $G_{\frac{t}{2}-1}$ as subgraphs due to the replication steps in the construction of $G_{\frac{t}{2}-1}$ (given in the flowchart Fig \ref{fig:Flowchart}).  For all $\frac{t}{2}-1 > j>i$, since copies of $G_i$ reside inside $G_j$ as subgraphs, when we replicate $G_j$ in the $(j+1)^{th}$ replication step, all the copies of $G_i$ contained within $G_j$ also get replicated; we refer to all such copies of $G_i$ here. Let us denote the disjoint union of these subgraphs corresponding to all copies of $G_i$ formed due to replication steps by $G_i^{rep}=\cup_{j=1}^{f_i} (G_i)_j$, where $(G_i)_j$ is the $j^{th}$ copy of $G_i$ in $G_{\frac{t}{2}-1}$ formed due to some replication step. We view $G_i^{rep}$ as a subgraph of $G_{\frac{t}{2}-1}$.
	\item Note that the construction of $G_{\frac{t}{2}-1}$ proceeds by adding nodes layer by layer with each layer connecting to the layer below it in a tree-like fashion while maintaining the girth of the overall graph to be atleast $t+1$. Let $N(v)$ represent neighbors of a node $v$ in $G_{\frac{t}{2}-1}$. Let $(U_i)_j$ be the copy of $U_i$ in $V((G_i)_j), \forall 1 \leq j \leq f_i$. For $1 \leq i \leq \frac{t}{2}-1 $, the nodes $T_i=\cup_{j=1}^{f_i} (U_i)_j$ represent the nodes in the $i^{th}$ layer and they connect in a tree-like fashion to the $(i-1)^{th}$ layer nodes $T_{i-1}=\cup_{j=1}^{f_{i-1}} (U_{i-1})_j$. The connection is tree-like because for $v \in T_i$, let $S_v = N(v) \cap T_{i-1}$ then $S_v \cap S_w = \emptyset$ and $|S_v|=|S_w|=r$ $,\forall v,w \in T_i,  v \neq w$ and $N(v) \cap T_p = \emptyset ,\forall p \notin \{i-1,i+1\}$,$\forall v \in T_i$. 
	\item \textbf{Girth of $G_i$:} Let the girth of $G_{i-1}$ be at least $t+1$. Consider the construction of graph $G_i$ from $l_i$ replicated copies of $G_{i-1}$ and the biregular bipartite graph $B_i$. $(U_{i-1})_j$ is the copy of $U_{i-1}$ in the $j^{\text{th}}$ copy of $G_{i-1}$ i.e., $(G_{i-1})_j$, $1\leq j \leq l_i$ appearing at the replication step in the construction of $G_i$ from $G_{i-1}$. Consider a cycle in $G_i$. If the cycle does not involve any nodes in $U_i$ then it must be completely contained in $(G_{i-1})_j$ for some $1\leq j \leq l_i$ and hence of length at least $t+1$ because the girth of $G_{i-1}$ is assumed to be at least $t+1$. If the cycle involves a node $v_1 \in U_i$ then the cycle must be of the form $v_1P_{a_1b_1}v_2...v_qP_{a_qb_q}v_1$ (as shown in fig \ref{fig:GirthExplanation}) where $v_m \in U_i, a_m,b_m \in \bigcup \limits_{j=1}^{l_i}(U_{i-1})_j$ and $P_{a_mb_m}$ (shown in green in fig \ref{fig:GirthExplanation}) is a path in $(G_{i-1})_p$ for some $1\leq p\leq l_i$ starting at $a_m$ and ending at $b_m$ for $1\leq m\leq q$. This is because when we start from $v_1$ and move to a node say $a_1$ in $\bigcup \limits_{j=1}^{l_i}(U_{i-1})_j$ the only way to come back to $v_1$ for completing the cycle is by traversing a path $P_{a_1b_1}$ in $(G_{i-1})_p$ for some $1\leq p\leq l_i$ ending at some node $b_1 \in (U_{i-1})_p$. By repeating the above argument it can be seen that the cycle must be of the form $v_1P_{a_1b_1}v_2...v_qP_{a_qb_q}v_1$. Now by above argument,for $1\leq m\leq q$, let $a_m,b_m \in (U_{i-1})_{\sigma_m}$, $1 \leq \sigma_m \leq l_i$. Hence $a_m,b_m$ must have been merged with some two nodes in $V^i_{\sigma_m}$ (nodes coming from splitting $v^i_{\sigma_m} \in L_i$ into degree 1 nodes) in the construction of $G_i$. Hence $v_1v^i_{\sigma_1}v_2v^i_{\sigma_2} \hdots v_qv^i_{\sigma_q} v_1$ must contain a non trivial cycle in $B_i$. Also the length of the path $P_{a_mb_m}$ is at least $2(i-1)+1$ due to the tree-like structure of $(G_{i-1})_{\sigma_m}$. Hence if the girth of $B_i$ is $g_i$ then the length of the cycle $v_1P_{a_1b_1}v_2...v_qP_{a_qb_q}v_1$ is at least $2q+q(2(i-1)+1) \geq g_i+\frac{g_i}{2}(2(i-1)+1)= g_i(i+\frac{1}{2}) \geq \frac{t+1}{i+\frac{1}{2}}(i+\frac{1}{2}) = t+1$. Hence girth of $G_i$ is at least $t+1$.
%	
%	Consider construction of graph $G_i$ from replicated copies of $G_{i-1}$ and the biregular bipartite graph $B_i$. Let the girth of $B_i$ be $g_i$. Consider a smallest cycle in $B_i$. After splitting the nodes in $L_i$ into degree-1 nodes and merging with nodes in copies of $U_{i-1}$ we will compute the length of a smallest possible cycle in $G_i$. In fig \ref{fig:GirthExplanation} we show only those edges of the graph $G_i$ that are part of a smallest cycle in $G_i$. Note that the edges above the nodes labeled by $\bigcup_j (U_{i-1})_j$ (colored blue) constituted a smallest cycle of length $g_i$ in $B_i$. Now, let the edges below the nodes labeled by $\bigcup_j (U_{i-1})_j$ (colored green) represent the part of the smallest cycle that lies in copies of $G_{i-1}$. Consider two nodes $a,b \in \bigcup_j (U_{i-1})_j$. Due to the tree-like structure of $G_{i-1}$ it can be seen that the length of the path in the smallest cycle between nodes $a$ and $b$ below them (shown by green edges) is at least $2(i-1)+1$. Thus, for every $2$ edges in $B_i$ an additional length of $2(i-1)+1$ gets added to the cycle length. Therefore it is seen that a smallest cycle in $G_i$ will have a length of at least $\frac{g_i}{2}(2i+1)$. Thus, if $g_i \geq \left \lceil \frac{t+1}{i+\frac{1}{2}} \right \rceil$ then girth of $G_i$ is at least $t+1$.
\end{enumerate}
\begin{figure}[ht]
	\centering
	\includegraphics[scale = 0.4]{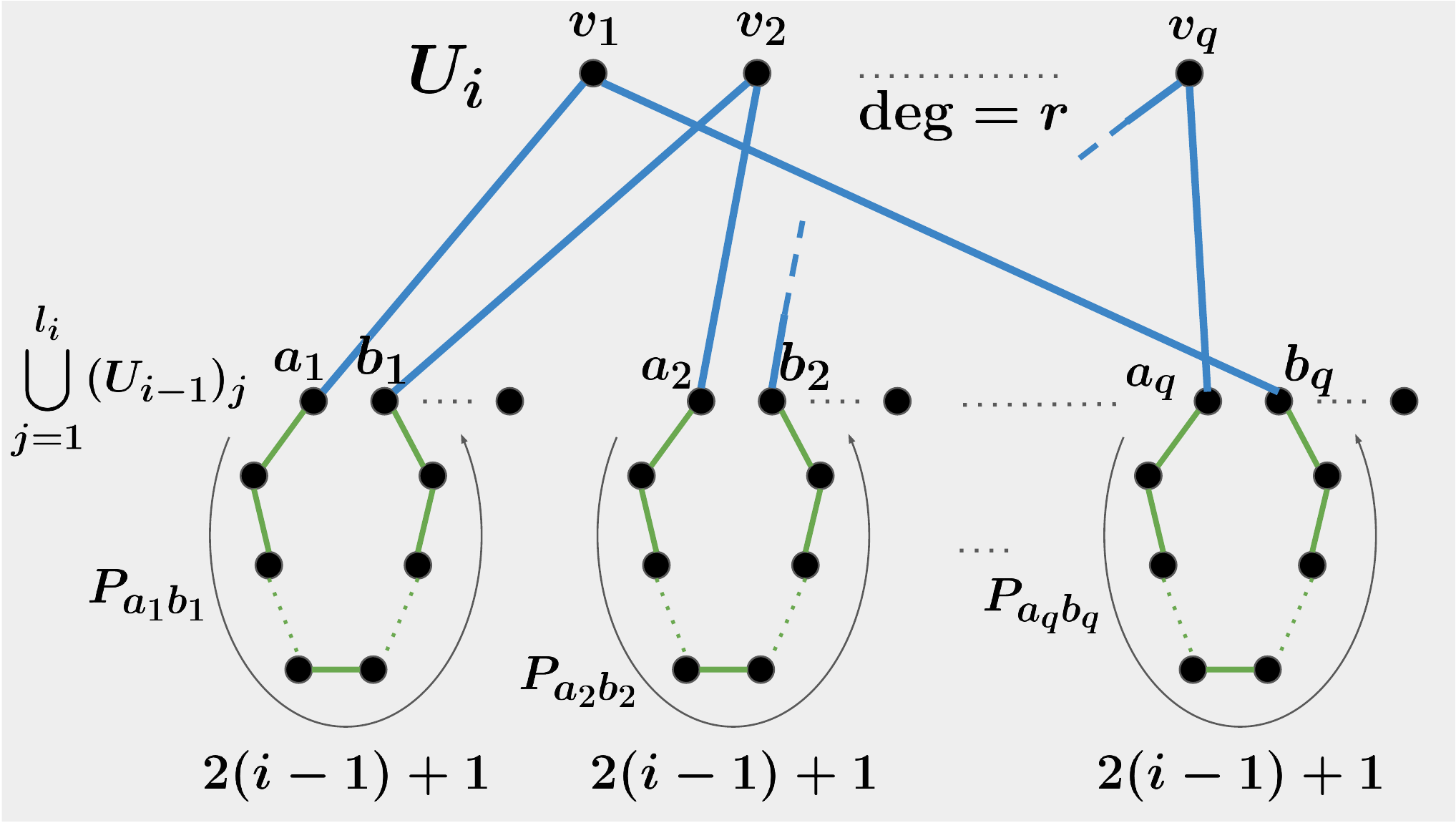}
	\caption{Figure showing a cycle in the graph $G_i$. Only the edges constituting a cycle are shown.}
	\label{fig:GirthExplanation}
\end{figure}
\vspace{0.1cm}
\textbf{Description of a rate-optimal code $\mathcal{C}$ based on $G_{\frac{t}{2}-1}$:}\\
Code $\mathcal{C}$ is defined on the graph $G_{\frac{t}{2}-1}$ as follows:
\begin{enumerate}
	\item The edges of $G_0^{rep}$ represent information symbols. It can be seen that the number of information symbols is $k = \frac{u_0r}{2}\prod_{j=1}^{\frac{t}{2}-1}l_j$.
	\item Every node of $G_{\frac{t}{2}-1}$ represents a distinct parity symbol described as follows.
	\item A node $v \in V(G_0^{rep})$ represents a parity symbol which is the binary sum of information symbols that are represented by edges in $G_0^{rep}$ incident on $v$. 
	\item For $1 \leq i \leq \frac{t}{2}-1$, a node $v \in T_i$ in $G_{\frac{t}{2}-1}$ represents a parity symbol which is the binary sum of code symbols that are represented by the nodes in $N(v) \cap T_{i-1}$.
	
	%	every node in the sets $U_{i}$ of the copies of $G_{i}$ in the graph $G_i^{rep}$ represents a parity symbol which is the binary sum of symbols represented by its neighboring nodes in $G_i^{rep}$.\\
	%	 Equivalently, a parity symbol represented by a node $v \notin V(G_0^{rep})$ in $G_{\frac{t}{2}-1}$ is the binary sum of code symbols represented by neighbors of $v$ which are introduced before $v$ in the iterative construction of $G_{\frac{t}{2}-1}$ represented by the flow chart Fig \ref{fig:Flowchart}.
	\item It can be seen that the total number of nodes is $n-k = \sum_{i=0}^{\frac{t}{2}-1}u_i\prod_{j=i+1}^{\frac{t}{2}-1}l_j = \sum_{i=0}^{\frac{t}{2}-1}\frac{u_0}{r^i}\prod_{j=1}^{\frac{t}{2}-1}l_j$.
	\item  $\mathcal{C}$ is defined by the information symbols represented by edges in  $G_0^{rep}$ and parity symbols represented by nodes in $G_{\frac{t}{2}-1}$.
	\item From the above counts, the rate of the code $\mathcal{C}$ can be seen to be equal to the bound given by \eqref{Thm1}.
\end{enumerate}
We now prove that $\mathcal{C}$ can correct $t$ erasures sequentially.
\vspace{-0.07cm}
\begin{enumerate}
	\item The Tanner graph of our code can be viewed as follows: We now view $G_{\frac{t}{2}-1}$ differently to give an alternative description of our code. In the graph $G_{\frac{t}{2}-1}$, let every edge represent a code symbol and every node represent a parity check of the code symbols corresponding to the edges incident on it. Apart from this, for every node in $U_{\frac{t}{2}-1}(\subseteq V(G_{\frac{t}{2}-1}))$ there is one distinct degree-1 variable node attached to it. Each of these degree-1 variable nodes corresponds to a distinct code symbol. The code symbol corresponding to a degree-1 variable node is part of the parity check represented by the node in $U_{\frac{t}{2}-1}$ to which it is attached. Our code is now defined by the code symbols represented by the edges of $G_{\frac{t}{2}-1}$ and degree-1 variable nodes attached to $U_{\frac{t}{2}-1}$. This completes the alternative description of our code.\\
	The alternative description of the code can be illustrated using the figures \ref{fig:Tanner1}, \ref{fig:Tanner2} and \ref{fig:Tanner3}.
	\begin{itemize}
		\item Consider a node $v \in T_i$ for $1 \leq i \leq \frac{t}{2}-2$ in $G_{\frac{t}{2}-1}$. In the original description of the code, the node $v$ represents a code symbol say $c_p$ which is the binary sum of $r$ symbols say $c_1,c_2,...,c_r$ where $c_1,c_2,...,c_r$ are represented by nodes in $N(v)\cap T_{i-1}$; in the alternative description of the code the same node $v$ now represents the parity check $c_p\oplus c_1\oplus c_2\oplus ...\oplus c_r=0$ and as shown in figure \ref{fig:Tanner1} the edge going out upward from the node $v$ represents the code symbol $c_p$ and the edges going out downward from $v$ represent the code symbols $c_1,c_2,...,c_r$. 
		\begin{figure}[ht]
			\centering
			\includegraphics[scale = 0.6]{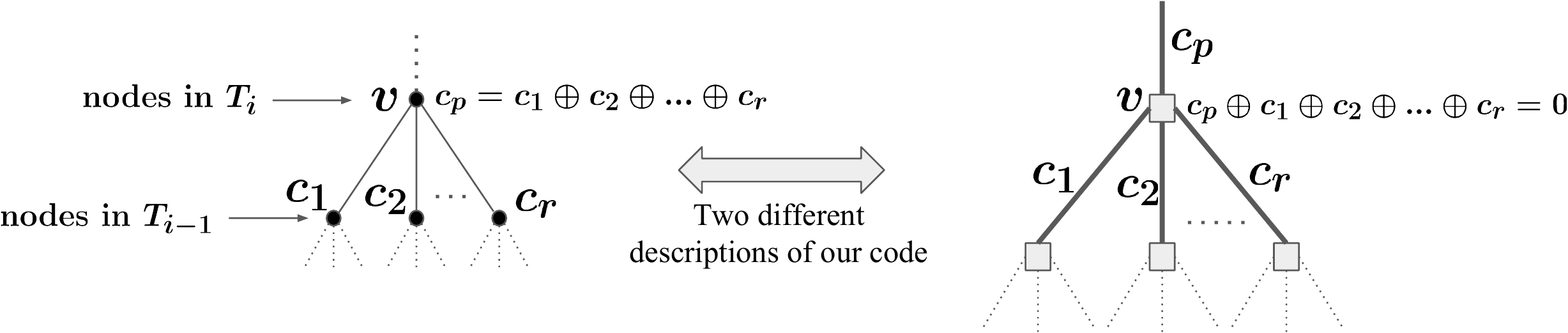}
			\caption{Figure showing two descriptions of our code at a node $T_i$ for $1 \leq i \leq \frac{t}{2}-2$}
			\label{fig:Tanner1}
		\end{figure}
		\item Consider a node $v \in U_{\frac{t}{2}-1}$ in $G_{\frac{t}{2}-1}$. In the original description of the code, the node $v$ represents a code symbol say $c_p$ which is the binary sum of $r$ symbols say $c_1,c_2,...,c_r$ where $c_1,c_2,...,c_r$ are represented by nodes in $N(v)\cap T_{\frac{t}{2}-2}$; in the alternative description of the code the same node $v$ now represents the parity check $c_p\oplus c_1\oplus c_2\oplus ...\oplus c_r=0$ and as shown in figure \ref{fig:Tanner2} the degree-1 node attached to $v$ represents the code symbol $c_p$ and the edges going out downward from $v$ represent the code symbols $c_1,c_2,...,c_r$. Hence there is a unique distinct degree-1 node attached to every node in $U_{\frac{t}{2}-1}$. These are exactly the degree-1 variable nodes we described above in the alternative description of the code. 		
		\begin{figure}[ht]
			\centering
			\includegraphics[scale = 0.65]{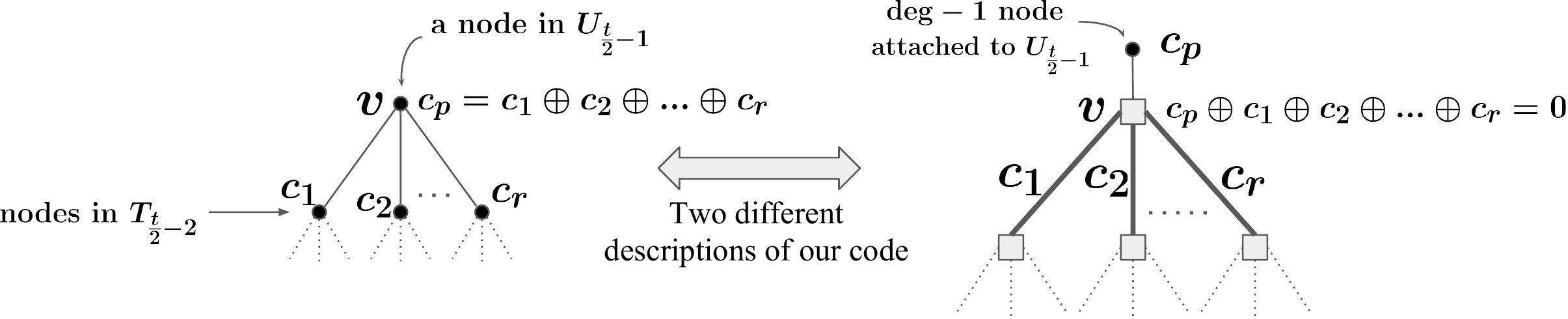}
			\caption{Figure showing two descriptions of our code at a node in $U_{\frac{t}{2}-1}$}
			\label{fig:Tanner2}
		\end{figure}
		\item Consider a node $v \in T_0$ in $G_{\frac{t}{2}-1}$. In the original description of the code, the node $v$ represents a code symbol say $c_p$ which is the binary sum of $r$ information symbols say $i_1,i_2,...,i_r$ where $i_1,i_2,...,i_r$ are represented by edges incident on $v$ in $G_0^{\text{rep}}$; in the alternative description of the code the same node $v$ now represents the parity check $c_p\oplus i_1\oplus i_2\oplus ...\oplus i_r=0$ and as shown in figure \ref{fig:Tanner3} the edge going out upward from $v$ represents the code symbol $c_p$ and the edges going out downward from $v$ still represents the information symbols $i_1,i_2,...,i_r$.
		\begin{figure}[ht]
			\centering
			\includegraphics[scale = 0.6]{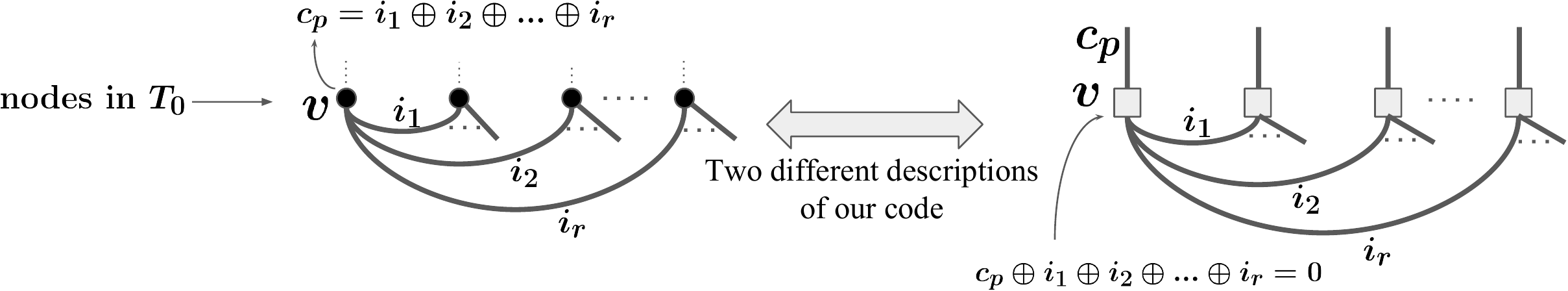}
			\caption{Figure showing two descriptions of our code at a node in $T_0$}
			\label{fig:Tanner3}
		\end{figure} 	
		\item We transform the meaning of nodes and edges in $G_{\frac{t}{2}-1}$ from original to alternative description of the code as described above across nodes and edges of the graph $G_{\frac{t}{2}-1}$ and it can be done maintaining consistency across the tranformations of meaning of nodes and edges from original to alternative description.
		\item By the above explanation, all the parity checks necessary to describe our code are represented by nodes in the graph $G_{\frac{t}{2}-1}$ in the alternative description.
	\end{itemize} 
	 Note that degree-1 variable nodes are not part of $G_{\frac{t}{2}-1}$ and these degree-1 variable nodes are the reason for rate optimality.
	\item Let $E$ be a set of erased symbols. Let the code symbols corresponding to degree-1 variable nodes be $D_1$. Let $E \cap D_1 = \emptyset$. Let an edge $(v_1,v_2)$ represent an erased symbol in $E$. Construct a maximal walk $P=v_1v_2...v_w$, $v_i \in V(G_{\frac{t}{2}-1})$ $,\forall i \in [w]$ such that $v_i \neq v_j$, $\forall\ w \geq i \neq j \geq 1$, and the edge $(v_i,v_{i+1})$ represents an erased code symbol in $E$ $\forall i \in [w-1]$ (walk $P$ is maximal in terms of number of vertices). Since the walk is maximal any edge of the form $(v_w,v)$ for $v \in N(v_w)-\{v_{w-1}\}$, $v \in V(G_{\frac{t}{2}-1})$:
	\begin{enumerate}
		\item either represents an unerased symbol for all $v \in N(v_w)-\{v_{w-1}\}$, $v \in V(G_{\frac{t}{2}-1})$ in which case the symbol represented by the edge $(v_{w-1},v_w)$ can be recovered by using the parity check represented by $v_w$.
		\item or represents an erased symbol and $v \in \{v_1,...,v_w\}$ in which case there is a cycle. Since girth of $G_{\frac{t}{2}-1}$ is $\geq t+1$, there must at least $t+1$ distinct erased symbols.
	\end{enumerate}
	Hence either an erased symbol can be recovered or there are more than $t$ erasures. The argument can be repeated for recovering subsequent erased symbols sequentially one by one if $|E|<t$.
	\item If $E \cap D_1 \neq \emptyset$, then start with a vertex $v_1 \in U_{\frac{t}{2}-1}$ to which a degree-1 variable node corresponding to an erased code symbol is attached. Now construct a maximal walk, $P=v_1...v_w$, $v_i \in V(G_{\frac{t}{2}-1}$),$\forall i \in [w]$, on the erased symbols as before. Now either as before, one of the symbols in $E$ can be recovered or there are more than $t$ erased symbols in $E$ due to a cycle or the walk ends in another node in $U_{\frac{t}{2}-1}$ i.e., $v_w \in U_{\frac{t}{2}-1}$ with the code symbol corresponding to degree-1 variable node attached to $v_w$ also being erased. Since $G_{\frac{t}{2}-1}$ is tree-like, any two nodes in $U_{\frac{t}{2}-1}$ are separated by a path of length at least $t-1$ in $G_{\frac{t}{2}-1}$. Hence there are at least $t-1+2=t+1$  erasures (+2 because $|E \cap D_1| \geq 2$).
	\item Thus the entire point of constructing  $G_{\frac{t}{2}-1}$ in a tree-like fashion is for protection from more than one erasure in the code symbols corresponding to degree-1 variable nodes attached to $U_{\frac{t}{2}-1}$.
\end{enumerate}
\subsection{Construction of codes achieving the upper bound on rate for $t$ odd:}
In this subsection we give a construction of codes achieving the rate bound \eqref{Thm2} for any $r \geq 3$ and $t = 2s-1, s \geq 1$.
A construction achieving the bound \eqref{Thm2} for $t=5$ can be found in \cite{balaji2016binary}. For $t=3$, the bound \eqref{Thm2} can be achieved by taking the product of two $[r+1,r]$ single parity check code (2 dimensional product code). For $t=1$, the bound \eqref{Thm2} can be achieved by taking the $[r+1,r]$ single parity check code. Hence we give a rate optimal construction for any $r \geq 3$ and $t = 2s-1$, $s \geq 2$ in the following.\\
We provide an iterative, graph-based construction of rate-optimal codes for a given $r \geq 3$ and $t=2s-1$, $s \geq 2$. We build a new graph $G_{s-1}$ iteratively, starting from $G_0$ (different from the one considered for $t$ even case.), by adding nodes in vertical, layer-by-layer fashion.  At every stage of the iteration, the graph $G_i$ (again different from the one constructed for $t$ even case.) thus constructed, will always have girth $\geq t+1$.  We then define our rate-optimal code based on the graph $G_{s-1}$.\\

\textbf{Construction of graph $G_{s-1}$:}
Denote the vertex set of a graph $G$ by $V(G)$. We make use of graphs $G_0$ and $B_i$, for $1 \leq i \leq s-1$ described below as the ingredients:
\ben
\item We pick $G_0$ as follows: Let $G_0$ be an $r$-regular bipartite graph with girth $\geq t+1$\textsuperscript{\ref{Fur}}.  Let $U_{0,1},U_{0,2}$ be the two (upper and lower) sets of nodes in the bipartite graph with $\text{degree}(x)=r,\forall\ x\in U_{0,1}$ and $\text{degree}(y)=r,\forall\ y\in U_{0,2}$. Hence $V(G_0)=U_{0,1} \cup U_{0,2}$. Define $|U_{0,1}|=|U_{0,2}|=u_0$.  
%The construction requires the base-graph $G_0$ to be $r$-regular with girth $\geq t+1$ and number of nodes that is a multiple of $2r^{s-1}$. Hence we take the disjoint union of a number of 
%%$\frac{\text{LCM}(|V(G)|,r^{\frac{t}{2}-1})}{|V(G)|}$ 
%copies of $G$, with the number chosen to ensure that the size of the resultant graph is a multiple of $2r^{\frac{t}{2}-1}$.  We then declare the resultant graph to be $G_0$.
\item Next, for $i$ in the range $1 \leq i \leq s-1$, we iteratively construct an $(r,2u_{i-1})$-biregular bipartite graph $B_i$ with girth $g_i \geq \left \lceil \frac{t+1}{i+\frac{1}{2}} \right \rceil$ with the following properties. Let $U_i,L_i$ be the two (upper and lower) sets of nodes in the bipartite graph with $\text{degree}(x)=r,\forall\ x\in U_i$ and $\text{degree}(y)=2u_{i-1},\forall\ y\in L_i$. Hence $V(B_i)=U_i \cup L_i$. We have that  $U_i = U_{i,1} \cup U_{i,2}$ with $|U_{i,1}|=|U_{i,2}|$ and $U_{i,1} \cap U_{i,2} = \emptyset$. Let $|U_{i,1}|=|U_{i,2}|=u_i$ and $|L_i|=l_i$. Therefore $|U_i|=2u_i$. By edge counting, we have $2ru_i = 2u_{i-1}l_i$ i.e. $ru_i = u_{i-1}l_i$. Additionally we have that, $\forall \ x \in L_i, |N(x)\cap U_{i,1}|=|N(x)\cap U_{i,2}|=u_{i-1}$ where $N(x)$ denotes the neighbors of node $x$ in $B_i$. Construction of such graphs will be described separately. 
\een

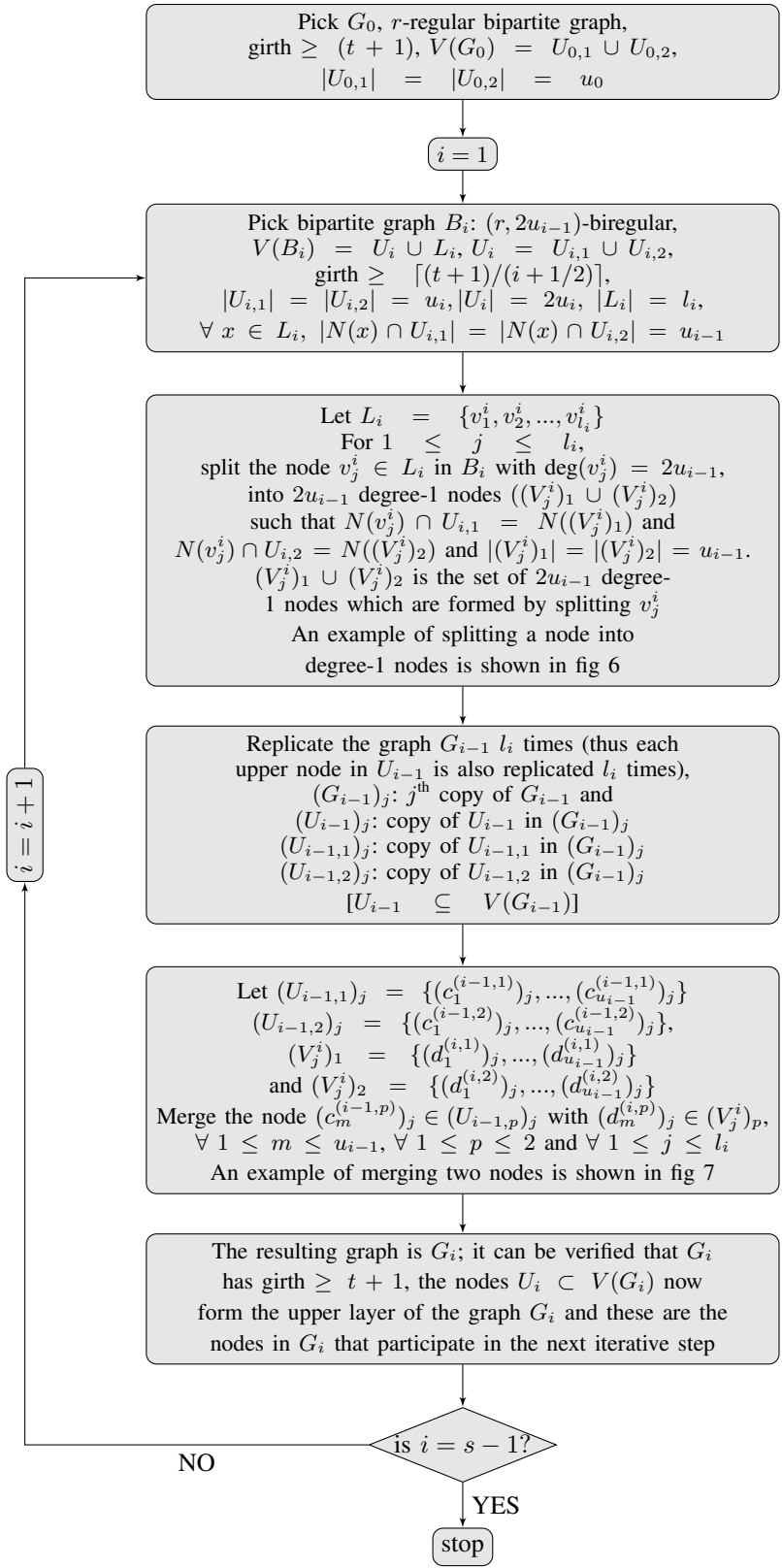
\begin{figure}
	\centering 
	\tikzstyle{decision} = [diamond, draw, fill=gray!20, 
	text badly centered, node distance=2cm, inner sep=0pt, aspect=2.5]
	
	\tikzstyle{block} = [rectangle, draw, fill=gray!20, 
	text width=24em, text centered, rounded corners, node distance=2.4cm]
	
	\tikzstyle{block2} = [rectangle, draw, fill=gray!20, 
	text centered, rounded corners, node distance=1.4cm]
	
	\tikzstyle{block3} = [rectangle, draw, fill=gray!20,  text centered, rounded corners, node distance=1.4cm]    
	
	\tikzstyle{line} = [draw, -latex']
	
	\tikzstyle{cloud} = [draw, ellipse,fill=red!20, node distance=2cm]
	
	\begin{tikzpicture}[node distance = 1.8cm, auto]
	% Place nodes
	\node [block] (INIT) {\small{Pick $G_0$, $r$-regular bipartite graph,\\
    girth $\geq(t+1)$, $V(G_0)=U_{0,1} \cup U_{0,2}$, \\$|U_{0,1}|=|U_{0,2}|=u_0$}};
	
	\node [block3, below of=INIT, yshift=0cm] (INITIALI) {\small{$i=1$}} ;
	
	\node [block, below of=INITIALI, yshift=0.7cm] (BIPARTITE) {\small{Pick bipartite graph $B_i$: $(r,2u_{i-1})$-biregular,  \\ $V(B_i)=U_i \cup L_i$, $U_i = U_{i,1} \cup U_{i,2}$,\\ girth $\geq  \left \lceil (t+1)/(i+1/2) \right \rceil$, \\ $|U_{i,1}|=|U_{i,2}|=u_i,|U_i|=2u_i,\  |L_i|=l_i$,\\ $\forall \ x \in L_i,\ |N(x)\cap U_{i,1}|=|N(x)\cap U_{i,2}|=u_{i-1}$}};
	
	\node [block, below of=BIPARTITE, yshift=-1.2cm] (SPLIT) {\small{Let $L_i=\{v_1^i,v_2^i,...,v_{l_i}^i\}$ \\ For $1 \leq j \leq l_i$, \\ split the node $v_j^i \in L_i$ in $B_i$ with $\text{deg}(v_j^i)=2u_{i-1}$, into $2u_{i-1}$ degree-1 nodes $((V_j^i)_1 \cup (V_j^i)_2)$} such that $N(v_j^i) \cap U_{i,1}=N((V_j^i)_1)$ and $N(v_j^i) \cap U_{i,2}=N((V_j^i)_2)$ and $|(V_j^i)_1| =|(V_j^i)_2| = u_{i-1}$. \\ $(V_j^i)_1 \cup (V_j^i)_2$ is the set of $2u_{i-1}$ degree-1 nodes which are formed by splitting $v_j^i $ \\An example of splitting a node into degree-1 nodes is shown in fig \ref{fig:Splitting}};
	
	\node [block, below of=SPLIT, yshift=-1.5cm] (REPLICATE) {\small{Replicate the graph $G_{i-1}$ $l_i$ times (thus each upper node in $U_{i-1}$ is also replicated $l_i$ times), \\ $(G_{i-1})_j$: $j^{\text{th}}$ copy of $G_{i-1}$ and $(U_{i-1})_j$: copy of $U_{i-1}$ in $(G_{i-1})_j$ \\
			$(U_{i-1,1})_j$: copy of $U_{i-1,1}$ in $(G_{i-1})_j$\\
			$(U_{i-1,2})_j$: copy of $U_{i-1,2}$ in $(G_{i-1})_j$\\ $\text{[}U_{i-1} \subseteq V(G_{i-1})\text{]}$}};
	
	\node [block2,  left of=REPLICATE, node distance=4cm, xshift=-2cm] (UPDATE) {\rotatebox[]{90}{$i=i+1$}};
	
	\node [block, below of=REPLICATE, yshift=-1.1cm] (MERGE) {\small{Let $(U_{i-1,1})_j=\{(c_{1}^{(i-1,1)})_j,...,(c_{u_{i-1}}^{(i-1,1)})_j\}$ \\
	$(U_{i-1,2})_j=\{(c_{1}^{(i-1,2)})_j,...,(c_{u_{i-1}}^{(i-1,2)})_j\}$,\\
	$(V_j^i)_1=\{(d_{1}^{(i,1)})_j,...,(d_{u_{i-1}}^{(i,1)})_j \}$\\ and $(V_j^i)_2=\{(d_{1}^{(i,2)})_j,...,(d_{u_{i-1}}^{(i,2)})_j \}$ \\ Merge the node $(c_{m}^{(i-1,p)})_j \in (U_{i-1,p})_j$ with $(d_{m}^{(i,p)})_j \in (V_j^i)_p$, \\ $\forall\ 1 \leq m \leq u_{i-1}$, $\forall\ 1 \leq p \leq 2$ and $\forall\ 1 \leq j \leq l_i$ \\An example of merging two nodes is shown in fig \ref{fig:Merging} }};
	
	\node[block, below of=MERGE, yshift=-0.6cm] (GIGRAPH) {\small{The resulting graph is $G_i$; it can be verified that $G_i$ has girth $\geq t+1$, the nodes $U_i \subset V(G_i)$ now form the upper layer of the graph $G_i$ and these are the nodes in $G_i$ that participate in the next iterative step}};
	
	\node [decision, below of=GIGRAPH] (DECIDE) {is $i = s-1?$};
	
	\node [block3, below of=DECIDE] (stop) {stop};
	
	% Draw edges
	\path [line] (INIT) -- (INITIALI);
	\path [line] (INITIALI) -- (BIPARTITE);
	\path [line] (BIPARTITE) -- (SPLIT);
	\path [line] (SPLIT) -- (REPLICATE);
	\path [line] (REPLICATE) -- (MERGE);
	\path [line] (MERGE) -- (GIGRAPH);
	\path [line] (GIGRAPH) -- (DECIDE);
	\path [line] (DECIDE) -| node [near start] {NO} (UPDATE);
	\path [line] (UPDATE) |- (BIPARTITE);
	\path [line] (DECIDE) -- node {YES}(stop);
	%   \path [line,dashed] (expert) -- (init);
	%   \path [line,dashed] (system) -- (init);
	%   \path [line,dashed] (system) |- (replicatebi);
	
	\end{tikzpicture}
	
	\caption{Flowchart showing the iterative construction of $G_{s-1}$.}
	\label{fig:FlowchartOdd}
\end{figure}

   \textbf{Construction of $B_i$}:
      \begin{itemize}
      	\item Consider a $(r,u_{i-1})$-biregular bipartite graph $G^1 = (X_1 \cup X_2,E)$ with girth at least $\left\lceil\frac{t+1}{i+\frac{1}{2}}\right\rceil$. Let $X_1 = \{b_1,...,b_{\hat{m}}\}$, $X_2 = \{c_1,...,c_{\lambda}\}$ for some $\hat{m},\lambda$ with $\text{degree}(b_j) = r$, for $1 \leq j \leq \hat{m}$ and $\text{degree}(c_j) = u_{i-1}$, for $1 \leq j \leq \lambda$. Such a graph $G^1$ can be constructed due to \cite{Furedi}.
      	\item Consider a $\lambda$-regular bipartite graph $G^2=(X \cup Y,E)$ with  girth at least $\frac{1}{2}\left\lceil\frac{t+1}{i+\frac{1}{2}}\right\rceil$. Let $X = \{x_1,...,x_{l'}\}$ and $Y = \{y_1,...,y_{l'}\}$ for some $l'$ with each node of degree $\lambda$. Such a graph $G^2$ can be constructed due to \cite{Furedi}.
      	\item Take $2l'$ disjoint copies of $G^1$. Let $G^3$ be the disjoint union of the $2l'$ copies named $G^1_1,...,G^1_{2l'}$ of $G^1$. Denote the nodes $b_j,c_j$ in the $l^{th}$ copy $G^1_l$ by $b^l_j,c^l_j$ respectively. 
      	\item Replace the node $x_f$ in $G^2$ by the nodes $x_{1,f},...,x_{\lambda,f}$ for $1 \leq f \leq l'$. If the neighbors of $x_f$ in $G^2$ are the nodes in the set $\{y_{q_1},...,y_{q_\lambda}\}$ then connect $x_{j,f}$ to $y_{q_j}$ for $1 \leq j \leq \lambda$. Now replace the node $y_f$ by the nodes $y_{1,f},...,y_{\lambda,f}$ for $1 \leq f \leq l'$. If the neighbors of $y_f$ after replacing the nodes in $X$ were the nodes in the set $\{x_{\sigma_1,\beta_1},...,x_{\sigma_\lambda,\beta_\lambda}\}$ then connect $y_{j,f}$ to $x_{\sigma_j,\beta_j}$ for $1 \leq j \leq \lambda$. Denote this graph by $G^4$. For every edge $(x_{j,f},y_{\hat{j},\hat{f}})$ in $G^4$ merge the nodes $c^f_j$ and $c^{l'+\hat{f}}_{\hat{j}}$ in $G^3$ and label the resulting merged node as $v^i_{(f-1)\lambda+j}$ in $G^3$. Relabel $b^l_j$ as $c^{(i,1)}_{(l-1)\hat{m}+j}$, $\forall 1 \leq l \leq l'$,$\forall 1 \leq j \leq \hat{m}$ in $G^3$. Relabel $b^l_j$ as $c^{(i,2)}_{(l-l'-1)\hat{m}+j}$, $\forall \ l'+1 \leq l \leq 2l'$,$\forall 1 \leq j \leq \hat{m}$ in $G^3$ . After the merging and relabeling of nodes in $G^3$ as described, resulting graph is named as $B_i$.
      	\item The constructed graph $B_i$ is a $(r,2u_{i-1})$-biregular bipartite graph $B_i=(U_i \cup L_i, E_i)$ with $U_i=U_{i,1} \cup U_{i,2}$ where $U_{i,1}=\{c_{1}^{(i,1)},...,c_{u_{i}}^{(i,1)}\}$, $U_{i,2}=\{c_{1}^{(i,2)},...,c_{u_{i}}^{(i,2)}\}$ and $u_i = l'\hat{m}$ and $L_i=\{v^i_1,...,v^i_{l_i}\}$ where $l_i = l'\lambda$ and $\text{degree}(c_j^{(i,p)})=r$ for $1 \leq j \leq u_i$,$1 \leq p \leq 2$ and $\text{degree}(v^i_j)=2u_{i-1}$ for $1 \leq j \leq l_i$. It can be seen that $B_i$ has girth at least $\left\lceil \frac{t+1}{i+\frac{1}{2}}\right\rceil$. It can also be seen from the construction of $B_i$ that $|N(v^i_j) \cap U_{i,1}|=|N(v^i_j) \cap U_{i,2}|=u_{i-1}$, for $1 \leq j \leq l_i$.
      \end{itemize}
      \textbf{Construction of $G_{s-1}$:}
      The flowchart shown in Fig.~\ref{fig:FlowchartOdd} describes the iterative construction of the graph $G_{s-1}$. 
      
      We make the following observations (similar observations compared to $t$ even case):
      \begin{enumerate}
      	\item For $0 \leq i \leq s-1$, there are $f_i=\prod_{j=i+1}^{s-1}l_j$ disjoint copies of $G_i$ that reside inside $G_{s-1}$ as subgraphs due to the replication steps in the construction of $G_{s-1}$ (given in the flowchart Fig \ref{fig:FlowchartOdd}). For all $s-1 > j>i$, since copies of $G_i$ reside inside $G_j$ as subgraphs, when we replicate $G_j$ in the $(j+1)^{th}$ replication step, all the copies of $G_i$ contained within $G_j$ also get replicated; we refer to all such copies of $G_i$ here. Let us denote the disjoint union of these subgraphs corresponding to all copies of $G_i$ formed due to replication steps by $G_i^{rep}=\cup_{j=1}^{f_i} (G_i)_j$, where $(G_i)_j$ is the $j^{th}$ copy of $G_i$ in $G_{s-1}$ formed due to some replication step. We view $G_i^{rep}$ as a subgraph of $G_{s-1}$.
      	\item Note that the construction of $G_{s-1}$ proceeds by adding nodes layer by layer with each layer connecting to the layer below it in a tree-like fashion while maintaining the girth of overall graph to be atleast $t+1$. Let $N(v)$ represent neighbors of a node $v$ in $G_{s-1}$. Let $(U_i)_j$ be the copy of $U_i$ in $V((G_i)_j), \forall 1 \leq j \leq f_i$. For $1 \leq i \leq s-1 $, the nodes $T_i=\cup_{j=1}^{f_i} (U_i)_j$ represent the nodes in the $i^{th}$ layer and they connect in a tree-like fashion to the $(i-1)^{th}$ layer nodes $T_{i-1}=\cup_{j=1}^{f_{i-1}} (U_{i-1})_j$. The connection is tree-like because for $v \in T_i$, let $S_v = N(v) \cap T_{i-1}$ then $S_v \cap S_w = \emptyset$ and $|S_v|=|S_w|=r$ $,\forall v,w \in T_i,  v \neq w$  and $N(v) \cap T_p = \emptyset ,\forall p \notin \{i-1,i+1\}$,$\forall v \in T_i$. 
      	\item For $1 \leq p \leq 2$, let $(U_{i,p})_j$ be the copy of $U_{i,p}$ in $V((G_i)_j), \forall 1 \leq j \leq f_i$. In fact it can be seen that for $1 \leq i \leq s-1$, the nodes in $T^p_{i}  = \cup_{j=1}^{f_i} (U_{i,p})_j$ connect in a tree-like fashion to nodes in $T^p_{i-1}  = \cup_{j=1}^{f_i} (U_{i-1,p})_j$ for $p \in \{1,2\}$. The connection is tree-like because for $v \in T^p_i$, let $S^p_v = N(v) \cap T^p_{i-1}$ then $S^p_v \cap S^p_w = \emptyset$ and $|S^p_v|=|S^p_w|=r$ $,\forall v,w \in T^p_i,  v \neq w$  and $N(v) \cap T^p_q = \emptyset ,\forall q \notin \{i-1,i+1\}$,$\forall v \in T^p_i$.  This implies a tree-like graph is built starting from bottom layer $T^p_0$ (leaf-like nodes) towards $T^p_{s-1}$. It can seen that the tree-like graph built from $T^1_0$ does not intersect or disjoint with tree-like graph built from $T^2_0$. The only interaction between these 2 tree-like graphs is by edges representing information symbols connecting nodes in $T^1_0$ with $T^2_0$.
      	\item The fact that $G_i$ has girth at least $t+1$ can be proved using arguments similar to those used in $t$ even case. Hence we skip the proof.
      \end{enumerate}
      \textbf{Description of a rate-optimal code $\mathcal{C}$ based on $G_{s-1}$:}\\
      Code $\mathcal{C}$ is defined on the graph $G_{s-1}$. We will first describe a code $\mathcal{C}_1$ based on $G_{s-1}$. On puncturing the code $\mathcal{C}_1$ by dropping few parity symbols, we get our desired rate-optimal code $\mathcal{C}$.\\
      Code $\mathcal{C}_1$ is defined on the graph $G_{s-1}$ as follows:
      \begin{enumerate}
      	\item The edges of $G_0^{rep}$ represent information symbols. It can be seen that the number of information symbols is $k = u_0r\prod_{j=1}^{s-1}l_j$.
      	\item Every node of $G_{s-1}$ represents a distinct parity symbol described as follows.
      	\item A node $v \in V(G_0^{rep})$ represents a parity symbol which is the binary sum of information symbols that are represented by edges in $G_0^{rep}$ incident on $v$. 
      	\item For $1 \leq i \leq s-1$, a node $v \in T_i$ in $G_{s-1}$ represents a parity symbol which is the binary sum of code symbols that are represented by the nodes in $N(v) \cap T_{i-1}$.
      	
      	%	every node in the sets $U_{i}$ of the copies of $G_{i}$ in the graph $G_i^{rep}$ represents a parity symbol which is the binary sum of symbols represented by its neighboring nodes in $G_i^{rep}$.\\
      	%	 Equivalently, a parity symbol represented by a node $v \notin V(G_0^{rep})$ in $G_{\frac{t}{2}-1}$ is the binary sum of code symbols represented by neighbors of $v$ which are introduced before $v$ in the iterative construction of $G_{\frac{t}{2}-1}$ represented by the flow chart Fig \ref{fig:Flowchart}.
      	\item It can be seen that the total number of nodes is $n'-k = \sum_{i=0}^{s-1}2u_i\prod_{j=i+1}^{s-1}l_j = 2\sum_{i=0}^{s-1}\frac{u_0}{r^i}\prod_{j=1}^{s-1}l_j$.
      	\item $\mathcal{C}_1$ is an $[n',k]$ code defined by the information symbols represented by edges in  $G_0^{rep}$ and parity symbols represented by nodes in $G_{s-1}$.
      \end{enumerate}
      Code $\mathcal{C}$ is defined as follows:
      \begin{enumerate}
      	\item Puncture the code $\mathcal{C}_1$ by dropping the parity symbols represented by nodes in $U_{s-1,2}$. Hence $\mathcal{C}$ is an $[n,k]$ code defined by the information symbols represented by edges in  $G_0^{rep}$ and parity symbols represented by nodes in $V(G_{s-1})-U_{s-1,2}$.
      	\item Hence $n-k = n'-k-u_{s-1} = 2\sum_{i=0}^{s-2}u_i\prod_{j=i+1}^{s-1}l_j + u_{s-1}= 2 \sum_{i=0}^{s-2}\frac{u_0}{r^i}\prod_{j=1}^{s-1}l_j + \frac{u_0}{r^{s-1}}\prod_{j=1}^{s-1}l_j$.
      	\item From the above counts, the rate of the code $\mathcal{C}$ can be seen to be equal to the bound given by \eqref{Thm2}.
      \end{enumerate}
       The proof that $\mathcal{C}$ can correct $t$ erasures sequentially is similar to the $t$ even case. Hence we skip the proof but we give the main idea as follows. The only point to be noted for sequential recovery of $t$ erasures considering the cases we have dealt with in $t$ even case is that nodes in the top most layer which are part of the code $\mathcal{C}$ i.e., nodes in $T^1_{s-1} \cup T^2_{s-2}$ are separated from each other with just enough distance that a path of erased symbols connecting any two nodes (representing erased symbols) at this top-most layer (similar to $t$ even case) must have at least $t+1$ erased symbols. Since the path between 2 nodes in  $T^2_{s-2}$ is potentially smaller than the the path between two nodes in  $T^1_{s-1}$, we need a little extra distance separating two nodes in $T^2_{s-2}$ which is given by the bipartite nature of the graph at bottom most layer since the path from a node in $T^2_{s-2}$ can reach another node in $T^2_{s-2}$ only through two nodes $T^2_0$ we get 1 extra distance as any two nodes in $T^2_0$ is separated by a distance of atleast 2 due to the bipartite nature of $G_0^{rep}$. This is precisely the reason why we built tree-like graph from $T^1_0$ disjointly from $T^2_0$, so that when we puncture which drops the distance between nodes in top most layer, we get a little extra distance at the bottom layer.
\begin{note}
	The following conjecture on the rate of an $(n,k,r,t)_{seq}$ code appeared in \cite{song2016sequential}.
	\bean
	\frac{k}{n} \leq \frac{1}{1+\sum_{i=1}^{m}\frac{a_i}{r^i}}, \\
	a_i \geq 0,a_i \in \mathbb{Z}, \sum_{i=1}^{m} a_i = t, \\
	m=\lceil log_r(k) \rceil.
	\eean
	
	Our rate bound verifies the conjecture as our rate bound can be written in the form:\\
	
	For $t$ even:
	\bean
	\frac{k}{n} \leq \frac{1}{1+\sum_{i=1}^{\frac{t}{2}} \frac{2}{r^i}}.
	\eean

	For $t$ Odd:
	\bean
	\frac{k}{n} \leq \frac{1}{1+\sum_{i=1}^{s-1} \frac{2}{r^i} + \frac{1}{r^s}}.
	\eean
	
	More specific conjecture was given for $t=5,6$. While our bound proves the conjecture for $t=5$, the conjectured upper bound for $t=6$ does not hold as our upper bound on code rate is both larger and achievable).\\
	
	For $t=6$, Our bound takes the form:
	\bean
	\frac{k}{n} \leq \frac{r^3}{r^3+2r^2+2r+2} \\
	\frac{k}{n} \leq \frac{1}{1+\frac{2}{r}+\frac{2}{r^2}+\frac{2}{r^3}} \\
	\eean
	The conjecture for $t=6$ given in \cite{song2016sequential}:
	\bean
	\frac{k}{n} \leq \frac{1}{1+\frac{2}{r}+\frac{3}{r^2}+\frac{1}{r^3}} \\
	\eean
	
	Hence the conjecture given in \cite{song2016sequential} has a smaller rate than our bound for $t=6$. And further our upper bound on rate for $t=6$ can be achieved.
\end{note}

\begin{note}
	In \cite{RawMazVis} the authors provide a construction of a code with sequential recovery for any $r$ and $t$ having rate $\frac{r-1}{r+1}+\frac{1}{n}$. $\frac{r-1}{r+1}+\frac{1}{n}$ will meet our rate bound exactly when a Moore graph of degree $r+1$ and girth $t+1$ exists. Moore graphs(degree = $r+1$, girth = $t+1$) are shown to not exist for any $t \notin \{2,3,4,5,7,11\}$ for any $r \geq 2$ (see \cite{DynCageSur}). Hence the construction in \cite{RawMazVis} is not rate-optimal for most of the cases.
\end{note}
	
\appendices
\section{Proof of Claim \ref{claim1}} \label{AppendixA}
It is enough to show that:

\begin{itemize}
	\item $\{A_i\}$ are matrices with each column having Hamming weight 1.
	\item $\{D_i\}$ are matrices with each row having Hamming weight 1. 
\end{itemize}
As the point 1 written above combined with the fact that each column of $\left[\frac{A_i}{D_i}\right]$ has Hamming weight 2  implies $D_i,i \geq 1$ are matrices with each column having Hamming weight 1 and $D_0$ by definition is a matrix with each column having Hamming weight 1. \\
Let us denote the $j^{\text{th}}$ column of matrix $H_1$ by $\underline{h}_j$, $1 \leq j\leq n$. Let us show the claim by induction as follows:\\
(Note that in the arguments that follow $\underline{h}_{p_1},...,\underline{h}_{p_\psi}$, $\underline{h}_{y_1},...,\underline{h}_{y_\phi}$,  $\underline{h}_{z_1},...,\underline{h}_{z_\theta}$ and $\underline{h}_{w_1},...,\underline{h}_{w_\gamma}$ will be repeatedly used each time denoting some set of column vectors not necessarily related to a different instance of the same notation.)\\

Induction Hypothesis: 
\begin{itemize}
	\item  Property $P_i$: any $m \times 1$ vector having Hamming weight at most 2 with support contained in $\{\sum_{l=0}^{i-1}\rho_l+1,..,\sum_{l=0}^{i-1}\rho_l+\rho_i\}$  can be written as some linear combination of at most $2(i+1)$ column vectors of $H_1$ say $\underline{h}_{p_1},...,\underline{h}_{p_\psi}$ for some $\{p_1,...,p_\psi\} \subseteq \{1,...,\sum_{l=0}^{i}a_l\}$ and $0 < \psi \leq 2(i+1)$.
	\item Let us assume as induction hypthesis that the property $P_i$ is true and the Claim \ref{claim1} is true for $A_1,...A_i$, $D_0,...D_i$.
\end{itemize}
Initial step:
\begin{itemize}	
	\item We show that each row of $D_0$ has Hamming weight exactly 1.\\
	Suppose there exists a row of $D_0$ with Hamming weight more than 1; let the support set of the row be $\{i_1,i_2,...\}$. Then the columns $\underline{h}_{i_1},\underline{h}_{i_2}$ of $H_1$ can be linearly combined to give a zero column. This contradicts the fact that $d_{min}({\mathcal{C}}) \geq t+1, t > 0$ and $t$ is even.
	Hence, all rows of $D_0$ have Hamming weight exactly 1.
	\item If $t = 2$, then the claim is already proved. So let $t \geq 4$.		
	\item We show that each column of $A_1$ has Hamming weight exactly 1.\\
	Suppose $j^{th}$ column of $A_1$ $\text{for some } 1 \leq j \leq a_1$ has Hamming weight 2; let the support of the column be $\{j_1,j_2\}$ in $A_1$. Then the column $\underline{h}_{a_0 + j}$ in $H_1$ along with the 2 column vectors of $H_1$ say $\underline{h}_{p_1},\underline{h}_{p_2}$ where $\{p_1,p_2\} \subseteq \{1,...,a_0\}$ where $\underline{h}_{p_1}$ has exactly one non-zero entry in  $j_1^{th}$ co-ordinate and $\underline{h}_{p_2}$ has exactly one non-zero entry $j_2^{th}$ co-ordinate, can be linearly combined to give a zero column again leading to a contradiction on minimum distance. Such columns with one column having only one non-zero entry exactly in $j_1^{th}$ co-ordinate and another column having only one non-zero entry exactly in $j_2^{th}$ co-ordinate with column labels in $\{1,...,a_0\}$ exist due to the single weight columns in the matrix $D_0$.
	\item The above argument also shows that any $m \times 1$ vector having Hamming weight at most 2 with support contained in $\{1,..,\rho_0\}$ can be written as some linear combination of at most 2 column vectors of $H_1$ say $\underline{h}_{p_1},...,\underline{h}_{p_\psi}$ for some $\{p_1,...,p_\psi\} \subseteq \{1,...,a_0\}$ ($\psi = 1 \text{ or }2$). Hence Property $P_0$ is true.		
	\item We now show that each row of $D_1$ has Hamming weight exactly 1.
	Suppose $j^{th}$ row of $D_1$ has Hamming weight more than 1; let the support set of the row be $\{l_1,l_2,...,l_z\}$ in $D_1$. Now there is some linear combination of columns $\underline{h}_{a_0+l_1}$ and $\underline{h}_{a_0+l_2}$ in $H_1$ that gives a zero in $(\rho_0+j)^{th}$ co-ordinate and thus this linear combination has support contained in $\{1,...,\rho_0\}$ with Hamming weight at most 2. Now applying Property $P_0$ on this linear combination implies that there is a non-empty set of at most $4$ linearly dependent columns in $H_1$ leading to a contradiction on minimum distance.		
	\item Now we show that Property $P_{1}$ is true: We have to prove that any $m \times 1$ vector with Hamming weight at most $2$ with support contained in $\{\rho_0+1,..,\rho_0+\rho_1\}$ can be written as linear combination of at most $2(1+1)=4$ column vectors of $H_1$ say $\underline{h}_{p_1},...,\underline{h}_{p_\psi}$ for some $\{p_1,...,p_\psi\} \subseteq \{1,...,\sum_{l=0}^{1}a_l\}$ and $0 < \psi \leq 4$. This can be easily seen using arguments similar to ones presented before.
	Let an $m \times 1$ vector have non-zero entries exactly in co-ordinates $\rho_0+j_1,\rho_0+j_2$ or $\rho_0+j_1$. Then this vector can be linearly combined with at most 2 column vectors in $H_1$ say $\underline{h}_{y_1},...,\underline{h}_{y_\phi}$ where $\{y_1,...,y_\phi\} \subseteq \{ a_0+1,...,a_0+a_{1} \}$ ($\phi = 1 \text{ or }2$) (2 columns $\underline{h}_{y_1},\underline{h}_{y_2}$ with first and second column having a non-zero entry in co-ordinates $\rho_0+j_1,\rho_0+j_2$ respectively or a column $\underline{h}_{y_1}$ with a non-zero entry in $(\rho_0+j_1)^{th}$ co-ordinate. These columns $\underline{h}_{y_1},\underline{h}_{y_2}$ or $\underline{h}_{y_1}$ exist due to $D_1$.) to form a $m \times 1$ vector with Hamming weight at most 2 with support contained in $\{ 1,...,\rho_0 \}$ which in turn can be written as linear combination of at most $2$ column vectors in $H_1$ say $\underline{h}_{z_1},...,\underline{h}_{z_\theta}$ for some $\{z_1,...,z_\theta\} \subseteq \{1,...,a_0\}$ ($\theta = 1$ or $2$) by property $P_0$. Hence the given $m \times 1$ vector is written as linear combination of at most $2(1+1)=4$ column vectors in $H_1$ say $\underline{h}_{w_1},...,\underline{h}_{w_\gamma}$ for some $\{w_1,...,w_\gamma\} \subseteq \{1,...,\sum_{l=0}^{1}a_l\}$ and $0 < \gamma \leq 4$.
	
\end{itemize}
Induction step :
\begin{itemize}
	\item Let us assume by induction hypothesis that Property $P_i$ is true and the Claim \ref{claim1} is true for $A_1,...A_i$, $D_0,...D_i$ for some $i \leq \frac{t}{2}-2$ and prove the induction hypothesis for $i+1$. For $t=4$, the initial step of induction completes the proof of Claim \ref{claim1}. Hence assume $t>4$.
	\item Now we show that each column of $A_{i+1}$ has Hamming weight exactly 1: suppose $j^{th}$ column of $A_{i+1}$ for some $1 \leq j \leq a_{i+1}$ has Hamming weight 2; let the support of the column be $j_1,j_2$ in $A_{i+1}$. It is clear that the corresponding column vector  $\underline{h}_{\sum\limits_{l=0}^{i}a_l+j}$ in $H_1$ is a vector with support contained in $\{\sum_{l=0}^{i-1}\rho_l+1,..,\sum_{l=0}^{i-1}\rho_l+\rho_i\}$ and Hamming weight 2. Now applying Property $P_i$ on this column vector $\underline{h}_{\sum\limits_{l=0}^{i}a_l+j}$ implies that there is a non-empty set of at most $2(i+1)+1$ columns in $H_1$ which are linearly dependent; hence contradicts the minimum distance as $2(i+1)+1 \leq t-1$. Hence each column of $A_{i+1}$ has Hamming weight exactly 1.
	\item Now we show that each row of $D_{i+1}$ has Hamming weight exactly 1: suppose $j^{th}$ row of $D_{i+1}$ has Hamming weight more than 1; let the support set of the row be $\{l_1,...,l_z\}$ in $D_{i+1}$. Now some linear combination of columns $\underline{h}_{\sum\limits_{j=0}^{i}a_j+l_1}$ and $\underline{h}_{\sum\limits_{j=0}^{i}a_j+l_2}$ in $H_1$ will make the resulting vector have a $0$ in $(\sum\limits_{l=0}^{i}\rho_l+j)^{th}$ co-ordinate and the resulting vector also has Hamming weight at most 2 with support contained in $\{\sum_{l=0}^{i-1}\rho_l+1,..,\sum_{l=0}^{i-1}\rho_l+\rho_i\}$ and hence applying Property $P_i$ on this resulting vector implies that there is a non-empty set of at most $2(i+1)+2$ columns in $H_1$ which are linearly dependent; hence contradicts the minimum distance as $2(i+1)+2 \leq t$; thus proving that each row of $D_{i+1}$ has Hamming weight exactly 1.
	\item Now we show that Property $P_{i+1}$ is true: We have to prove that any $m \times 1$ vector with Hamming weight at most $2$ with support contained in $\{\sum_{l=0}^{i}\rho_l+1,..,\sum_{l=0}^{i}\rho_l+\rho_{i+1}\}$ can be written as linear combination of at most $2(i+2)$ column vectors of $H_1$ say $\underline{h}_{p_1},...,\underline{h}_{p_\psi}$ for some $\{p_1,...,p_\psi\} \subseteq \{1,...,\sum_{l=0}^{i+1}a_l\}$ and $0 < \psi \leq 2(i+2)$. This can be easily seen using arguments similar to ones presented before.
	Let an $m \times 1$ vector have non-zero entries in co-ordinates $\sum_{l=0}^{i}\rho_l+j_1,\sum_{l=0}^{i}\rho_l+j_2$ or $\sum_{l=0}^{i}\rho_l+j_1$. Then this vector can be linearly combined with at most 2 column vectors in $H_1$ say $\underline{h}_{y_1},...,\underline{h}_{y_\phi}$ where $\{y_1,...,y_\phi\} \subseteq \{ \sum_{l=0}^{i}a_l+1,...,\sum_{l=0}^{i}a_l+a_{i+1} \}$ with $\phi = 1$ or $2$,  (2 columns $\underline{h}_{y_1},\underline{h}_{y_2}$ with first column and second column having a non-zero entry in co-ordinates $\sum_{l=0}^{i}\rho_l+j_1,\sum_{l=0}^{i}\rho_l+j_2$ respectively or a column $\underline{h}_{y_1}$ with a non-zero entry in $(\sum_{l=0}^{i}\rho_l+j_1)^{th}$ co-ordinate. These columns $\underline{h}_{y_1},\underline{h}_{y_2}$ or $\underline{h}_{y_1}$ exist due to $D_{i+1}$.) to form a $m \times 1$ vector with Hamming weight at most 2 with support contained in $\{ \sum_{l=0}^{i-1}\rho_l+1,\sum_{l=0}^{i-1}\rho_l +\rho_i \}$ which in turn can be written as linear combination of at most $2(i+1)$ column vectors in $H_1$ say $\underline{h}_{z_1},...,\underline{h}_{z_\theta}$ for some $\{z_1,...,z_\theta\} \subseteq \{1,...,\sum_{l=0}^{i}a_l\}$ and $0 < \theta \leq 2(i+1)$ by property $P_i$. Hence the given $m \times 1$ vector is written as linear combination of at most $2(i+2)$ column vectors in $H_1$ say $\underline{h}_{w_1},...,\underline{h}_{w_\gamma}$ for some $\{w_1,...,w_\gamma\} \subseteq \{1,...,\sum_{l=0}^{i+1}a_l\}$ and $0 < \gamma \leq 2(i+2)$.\\
\end{itemize}	
\bibliographystyle{IEEEtran}
\bibliography{bib_file}	

% Generated by IEEEtran.bst, version: 1.14 (2015/08/26)
\begin{thebibliography}{10}
\providecommand{\url}[1]{#1}
\csname url@samestyle\endcsname
\providecommand{\newblock}{\relax}
\providecommand{\bibinfo}[2]{#2}
\providecommand{\BIBentrySTDinterwordspacing}{\spaceskip=0pt\relax}
\providecommand{\BIBentryALTinterwordstretchfactor}{4}
\providecommand{\BIBentryALTinterwordspacing}{\spaceskip=\fontdimen2\font plus
\BIBentryALTinterwordstretchfactor\fontdimen3\font minus
  \fontdimen4\font\relax}
\providecommand{\BIBforeignlanguage}[2]{{%
\expandafter\ifx\csname l@#1\endcsname\relax
\typeout{** WARNING: IEEEtran.bst: No hyphenation pattern has been}%
\typeout{** loaded for the language `#1'. Using the pattern for}%
\typeout{** the default language instead.}%
\else
\language=\csname l@#1\endcsname
\fi
#2}}
\providecommand{\BIBdecl}{\relax}
\BIBdecl

\bibitem{BalPraKum}
\BIBentryALTinterwordspacing
S.~B. Balaji, K.~P. Prasanth, and P.~V. Kumar, ``Binary codes with locality for
  multiple erasures having short block length,'' \emph{CoRR}, 2016. [Online].
  Available: \url{http://arxiv.org/abs/1601.07122/}
\BIBentrySTDinterwordspacing

\bibitem{GopHuaSimYek}
P.~Gopalan, C.~Huang, H.~Simitci, and S.~Yekhanin, ``{On the Locality of
  Codeword Symbols},'' \emph{IEEE Trans. Inf. Theory}, vol.~58, no.~11, pp.
  6925--6934, Nov. 2012.

\bibitem{PapDim}
D.~Papailiopoulos and A.~Dimakis, ``Locally repairable codes,'' in
  \emph{Information Theory Proceedings (ISIT), 2012 IEEE International
  Symposium on}, July 2012, pp. 2771--2775.

\bibitem{OggDat}
F.~Oggier and A.~Datta, ``Self-repairing homomorphic codes for distributed
  storage systems,'' in \emph{INFOCOM, 2011 Proceedings IEEE}, April 2011, pp.
  1215--1223.

\bibitem{HuaChenLi}
C.~Huang, M.~Chen, and J.~Li, ``Pyramid codes: Flexible schemes to trade space
  for access efficiency in reliable data storage systems,'' in \emph{Network
  Computing and Applications, 2007. NCA 2007. Sixth IEEE International
  Symposium on}, July 2007, pp. 79--86.

\bibitem{KamPraLalKum}
G.~Kamath, N.~Prakash, V.~Lalitha, and P.~Kumar, ``Codes with local
  regeneration,'' in \emph{Information Theory and Applications Workshop (ITA),
  2013}, Feb 2013, pp. 1--5.

\bibitem{TamBar_Optimal_LRC}
I.~Tamo and A.~Barg, ``A family of optimal locally recoverable codes,''
  \emph{IEEE Trans. Inf. Theory}, vol.~60, no.~8, pp. 4661--4676, 2014.

\bibitem{PraLalKum}
\BIBentryALTinterwordspacing
N.~Prakash, V.~Lalitha, and P.~V. Kumar, ``Codes with locality for two
  erasures,'' \emph{CoRR}, vol. abs/1401.2422, 2014. [Online]. Available:
  \url{http://arxiv.org/abs/1401.2422}
\BIBentrySTDinterwordspacing

\bibitem{RawMazVis}
A.~Rawat, A.~Mazumdar, and S.~Vishwanath, ``On cooperative local repair in
  distributed storage,'' in \emph{Information Sciences and Systems (CISS), 2014
  48th Annual Conference on}, 2014, pp. 1--5.

\bibitem{SonYue_3_Erasure}
\BIBentryALTinterwordspacing
W.~Song and C.~Yuen, ``Locally repairable codes with functional repair and
  multiple erasure tolerance,'' \emph{CoRR}, vol. abs/1507.02796, 2015.
  [Online]. Available: \url{http://arxiv.org/abs/1507.02796}
\BIBentrySTDinterwordspacing

\bibitem{SonYue_Binary_local_repair}
\BIBentryALTinterwordspacing
------, ``Binary locally repairable codes - sequential repair for multiple
  erasures,'' \emph{CoRR}, vol. abs/1511.06034, 2015. [Online]. Available:
  \url{http://arxiv.org/abs/1511.06034}
\BIBentrySTDinterwordspacing

\bibitem{balaji2016binary}
S.~Balaji, K.~Prasanth, and P.~V. Kumar, ``Binary codes with locality for four
  erasures,'' \emph{arXiv preprint arXiv:1607.02817}, 2016.

\bibitem{song2016sequential}
W.~Song, K.~Cai, and C.~Yuen, ``On sequential locally repairable codes,''
  \emph{arXiv preprint arXiv:1610.09767}, 2016.

\bibitem{PraKamLalKum}
N.~Prakash, G.~Kamath, V.~Lalitha, and P.~Kumar, ``Optimal linear codes with a
  local-error-correction property,'' in \emph{Proc. IEEE Int. Symp. Inf. Theory
  (ISIT)}, 2012, pp. 2776--2780.

\bibitem{SonDauYueLi}
W.~Song, S.~H. Dau, C.~Yuen, and T.~Li, ``Optimal locally repairable linear
  codes,'' \emph{Selected Areas in Communications, IEEE Journal on}, vol.~32,
  no.~5, pp. 1019--1036, May 2014.

\bibitem{TamBarFro}
\BIBentryALTinterwordspacing
I.~Tamo, A.~Barg, and A.~Frolov, ``Bounds on the parameters of locally
  recoverable codes,'' \emph{CoRR}, vol. abs/1506.07196, 2015. [Online].
  Available: \url{http://arxiv.org/abs/1506.07196}
\BIBentrySTDinterwordspacing

\bibitem{WanZha_Combinatorial_Repair_locality}
A.~Wang and Z.~Zhang, ``Repair locality from a combinatorial perspective,'' in
  \emph{Proc. IEEE Int. Symp. Inf. Theory (ISIT)}, 2014, pp. 1972--1976.

\bibitem{WangZhang_multiple_erasure}
------, ``Repair locality with multiple erasure tolerance,'' \emph{IEEE Trans.
  Inf. Theory}, vol.~60, no.~11, pp. 6979--6987, 2014.

\bibitem{JuaHolOgg}
L.~Pamies-Juarez, H.~Hollmann, and F.~Oggier, ``Locally repairable codes with
  multiple repair alternatives,'' in \emph{Information Theory Proceedings
  (ISIT), 2013 IEEE International Symposium on}, July 2013, pp. 892--896.

\bibitem{Furedi}
Z.~Furedi, F.~Lazebnik, A.~Seress, V.~A. Ustimenko, and A.~J. Woldar, ``Graphs
  of prescribed girth and bi-degree,'' \emph{Journal of Combinatorial Theory
  Ser. B 64 (2)}, pp. 228--239, 1995.

\bibitem{DynCageSur}
G.~Exoo and R.~Jajcay, ``Dynamic cage survey,'' \emph{Electron. J. Combin,
  Dynamic Survey}, 2013.

\end{thebibliography}

\end{document}